\makeatletter\@ifclassloaded{llncs}{}{\usepackage{amsthm}}\makeatother
  \tikzstyle{model}=[temprel/.style={->,draw},
  \tikzstyle{ettree}=[model,
  \tikzstyle{etltree}=[ettree,
  \tikzstyle{deletltree}=[ettree,
  \tikzstyle{delmodel}=[model,
  \newcommand{\mkinfset}[1]{
    \begin{pgfonlayer}{background}
      \filldraw[model,accrel,-,line width=23pt,line join=round] #1 -- cycle;
      \filldraw[white,line width=22pt,line join=round] #1 -- cycle;
    \end{pgfonlayer}
  }
\undefined\pgfrealjobname{\jobname}
    \else\pgfrealjobname{\regeneratepgf}
\author{Andreas Witzel}
\title{Characterizing perfect recall\\using next-step temporal operators\\in S5 and sub-S5 Epistemic Temporal Logic}
\newcommand{\PR}{\ensuremath{\mathsf{PR}}\xspace}
\newcommand{\PRs}{\ensuremath{\mathsf{sPR}}\xspace}
\newcommand{\PRsprime}{\ensuremath{\mathsf{wsPR}}\xspace}
\newcommand{\PRhc}{\ensuremath{\mathsf{PR_{hc}}}\xspace}
\newcommand{\PRhcl}{\ensuremath{\mathsf{PR_{hc}^\ell}}\xspace}
\newcommand{\PRee}{\ensuremath{\mathsf{PR_{ee}}}\xspace}
\def\dfn{\textbf}
\def\dfnless{\dfn} 
\def\F{\ensuremath{\mathcal{F}}\xspace}
\newcommand{\FSfive}{\dot\F}
\newcommand{\acc}{\sim}
\DeclareMathOperator{\accSfive}{\dot\acc}
\newcommand{\nextstep}{\Diamond}
\newcommand{\after}[1]{\langle#1\rangle}
\newcommand{\nextstepall}{\Box}
\newcommand{\afterall}[1]{[#1]}
\renewcommand{\implies}{\rightarrow}
\renewcommand{\phi}{\varphi}
\newcommand{\tiff}{if and only if\xspace}
\newcommand{\mpunct}{\enspace}
\DeclareMathOperator{\EE}{EE}
\newcommand{\crefformats}[7]{}
\newcommand{\crefformats}[7]{%
  \crefformat{#3}{##2\ifthenelse{\equal{#4}{}}{}{#4~}#1##1#2##3}
  \Crefformat{#3}{##2\ifthenelse{\equal{#5}{}}{}{#5~}#1##1#2##3}
  \crefrangeformat{#3}{\ifthenelse{\equal{#6}{}}{}{#6~}##3#1##1#2##4--##5#1##2#2##6}
  \Crefrangeformat{#3}{\ifthenelse{\equal{#7}{}}{}{#7~}##3#1##1#2##4--##5#1##2#2##6}
  \crefmultiformat{#3}{\ifthenelse{\equal{#6}{}}{}{#6~}##2#1##1#2##3}{ and~##2#1##1#2##3}{, ##2#1##1#2##3}{ and~##2#1##1#2##3}
  \Crefmultiformat{#3}{\ifthenelse{\equal{#7}{}}{}{#7~}##2#1##1#2##3}{ and~##2#1##1#2##3}{, ##2#1##1#2##3}{ and~##2#1##1#2##3}
  \crefrangemultiformat{#3}{\ifthenelse{\equal{#6}{}}{}{#6~}##2#1##1#2##3}{ and~##2#1##1#2##3}{, ##2#1##1#2##3}{ and~##2#1##1#2##3}
  \Crefrangemultiformat{#3}{\ifthenelse{\equal{#7}{}}{}{#7~}##2#1##1#2##3}{ and~##2#1##1#2##3}{, ##2#1##1#2##3}{ and~##2#1##1#2##3}
}
\newcommand{\marginlabel}[2]{%
  \mbox{}%
  \marginpar[\raggedleft\hspace{0pt}#1]{\raggedright\hspace{0pt}#2}%
}
\tikzstyle\undefined
\newcommand{\todoar}[2][]{\todo[#1]{#2}}
\else\tikzstyle{todoarrow}=[opacity=0.4,gray,-stealth]
\newcommand{\todoar}[2][]{%
  \marginlabel{\small #2}
        {\tikz[remember picture,overlay,baseline=(todoarrowstart.220)]\node(todoarrowstart){};$\lhd$ \small #2}%
  \ifthenelse{\equal{#1}{}}{}{{\color{red}[}#1{\color{red}]}}%
  \tikz[remember picture,overlay]\node[inner sep=2pt](todoarrowend){};%
  \tikz[remember picture,overlay]\path(todoarrowstart)edge[todoarrow,out=190,in=-45](todoarrowend);%
}
\newcommand{\todo}[2][]{%
  \marginlabel{$\rhd$ {\small #2}}{$\lhd$ \small #2}%
  \ifx\color\undefined%
  \ifthenelse{\equal{#1}{}}{}{{[}#1{]}}%
  \else%
  \ifthenelse{\equal{#1}{}}{}{{\color{red}[}#1{\color{red}]}}%
  \fi%
}
\newcounter{autoexternalpgf}
\newcommand{\suchthat}{\,|\,}
  \newcommand{\newtheoremwithalias}[3]{%
    \ifx\newaliascnt\undefined
    \newcounter{#1}
    \else
    \newaliascnt{#1}{#2}
    \fi
    \newtheorem{#1}[#1]{#3}
    \ifx\aliascntresetthe\undefined\else
    \aliascntresetthe{#1}
    \fi
  }
  \renewcommand{\creflastconjunction}{ and }
    \if@envcntsame\errmessage{cleveref naming doesn't work because no aliascntrs used in llncs.cls}
    \newtheorem{observation}{Observation}[section]
    \newtheorem{fact}{Fact}[section]
    \newtheorem{observation}{Observation}
    \newtheorem{fact}{Fact}
    \newcommand{\qedhere}{\qed}
    \newtheorem{theorem}{Theorem}[section]
\theoremstyle\undefined\else\theoremstyle{remark}\fi
\theoremstyle\undefined\else\theoremstyle{definition}\fi
\begin{document}

\maketitle

\begin{abstract}
  We review the notion of perfect recall
  in the literature on interpreted systems, game theory, and epistemic logic.
  In the context of Epistemic Temporal Logic (ETL),
  we give a (to our knowledge) novel frame condition for perfect recall,
  which is local and can straightforwardly be translated to a defining formula
  in a language that only has next-step temporal operators.
  This frame condition also gives rise to a complete axiomatization for S5 ETL frames with perfect recall.
  We then consider how to extend and consolidate the notion of perfect recall in sub-S5 settings,
  where the various notions discussed are no longer equivalent.
\end{abstract}

\section{Introduction}
\label{sec:introduction}

Perfect recall is an epistemic-temporal notion concerning an agent's ability to remember the past.
It does not entail that all knowledge an agent currently has is preserved---%
in fact, certain (negative) knowledge must be lost as the agent learns.
For example ``I know that I don't know $p$'' is lost when I learn~$p$.
Rather, it means that an agent remembers all the information he once \emph{had},
and can use it to reason about the present.

The notion of perfect recall is well-studied in game theory
(see, e.g., \cite{piccione_absent-minded_1997} and~\cite[Section~11.1.3]{osborne_course_1994})
and in the distributed computing/interpreted systems/temporal logic literature~%
\cite{fagin_reasoning_1995,van_der_meyden_axioms_1993,van_der_meyden_complete_2003,halpern_complete_2004,parikh_knowledge_2003}
(see \cite{halpern_ambiguities_1997} for a discussion from the viewpoint of the intersection of both).
In recent years, it has also been discussed in the epistemic logic community~%
\cite{van_benthem_games_2001,van_benthem_tree_2006,van_benthem_merging_2009,isaac_synchronizing_2010},
specifically in the context of Epistemic Temporal Logic (ETL) (see, e.g., \cite{van_benthem_tree_2006}).
ETL is an epistemic logic (or rather, family of logics) with added event modalities, interpreted on tree models,
and it is intended to capture interactions of agents over time.
Unlike with interpreted systems, the motivation is not dominated by the idea
of processes with control programs which govern their behavior into infinity,
and therefore, ETL does not necessarily aim at making statements about such long-term behavior.
Although the concept of histories (corresponding to runs in interpreted systems) does exist
and variants of ETL with long-range operators such as ``until'' have been studied,
common ETL languages focus on the local perspective and lack global temporal operators
that can talk about indefinite time spans.
In the following, we use ``ETL'' in the narrow sense of epistemic temporal logic with only next-step (local) temporal modalities.

Due to the differences between the mentioned frameworks,
various definitions and conditions for perfect recall exist.
Our aim here is to review and examine the relevant ones
from the viewpoint of a logic without long-range temporal operators.
We use the history-based framework and notation of ETL (also used by \citet{parikh_knowledge_2003}),
but interpreted systems are for our purposes essentially equivalent \cite{pacuit_comments_2007}
and so our considerations should be transferable, mutatis mutandis.

\medskip

In synchronous settings (i.e., where agents have access to a global clock),
there is a well-known characterization of perfect recall,
including a simple frame condition and a definition that can be expressed in ETL.
In contrast, without synchronicity the existing characterizations of perfect recall
seem to be inherently global in the sense that they talk about whole (prefixes of) histories (or runs),
or refer to some arbitrary point in them.
This precludes a direct translation into a first-order frame condition
(since reachability is not definable in first-order logic)
and also makes it unclear whether the notion can be expressed in ETL.
This is not an issue for the interpreted systems community,
since they by default have long-range temporal operators available.%
\footnote{It is not an issue for the game theory community because they do not commonly concern themselves with formal languages.}
Indeed, \citet{van_der_meyden_axioms_1993} axiomatized perfect recall using the ``until'' operator.
However, this does not transfer to logics which only have next-step temporal operators.

\medskip

After briefly introducing ETL in \cref{sec:epist-temp-logic}
and discussing perfect recall from the perspective of ETL in \cref{sec:defin-perf-recall},
we therefore in \cref{sec:char-pras-etl} propose an alternative characterization of perfect recall
which only uses single steps of temporal (and epistemic) succession.
We show that it is first-order definable and definable in ETL,
and we give a complete axiomatization for S5 ETL with perfect recall.
We then explore sub-S5 settings in \cref{sec:sub-s5-settings}.
\Cref{sec:conclusions} concludes.

\section{Epistemic-temporal logic (ETL)}
\label{sec:epist-temp-logic}

We focus on the single-agent case since perfect recall is a property inherent to one agent;
all our considerations carry over to the multi-agent case.
We consider models over some finite set~$E$ of \dfn{events}.
A \dfn{history} $h\in E^*$ is a finite sequence of events,
and we denote the empty history by~$\epsilon$.
We denote \dfn{sequences} simply by listing their elements,
possibly preceded by a prefix sequence.
For two histories $h,h'$ we write $h\leadsto_eh'$ if $h'=he$,
that is, if~$h'$ \dfn{extends}~$h$ by one event~$e$.
We write $h\leadsto h'$ if $h\leadsto_eh'$ for some event $e$.
We denote the transitive and reflexive closure of~$\leadsto$ by $\leadsto^*$,
so $h\leadsto^*h'$ says that~$h$ is a \dfn{prefix} of~$h'$ (possibly $h'$ itself),
or vice versa, $h'$ is an \dfn{extension} of~$h$.
For $h\leadsto^*h'$, we sometimes also write $h\preceq h'$,
and $h\prec h'$ for $h\preceq h'$ with $h\neq h'$.
A \dfn{protocol} $H\subseteq E^*$ is a finite set of histories closed under taking prefixes,
intuitively representing the allowed evolutions of the system.

An \dfn{(epistemic) accessibility relation} is a binary relation $\acc\ \subseteq H\times H$ on histories.
It specifies, for any given history~$h$,
the histories that the agent \dfn{considers possible} at~$h$.
Various conditions can be imposed on~$\acc$,
making it capture various notions of knowledge or belief
(for details see, e.g., \cite{chellas_modal_1980}).
Most commonly, the relation is assumed to be an \emph{equivalence relation},
making it capture a notion of ``correct knowledge''.
This case is also referred to as~\dfn{S5}.
We refer to less restrictive cases as \dfn{sub-S5}.
Relevant properties include \emph{transitivity},
reflecting \dfn{positive introspection},
and \emph{Euclideanness},
reflecting \dfn{negative introspection}.
Another property is that of \dfn{synchronicity},
which intuitively means that the agent has access to a clock.
It holds if, whenever $h\acc h'$, then the lengths of~$h$ and~$h'$ are equal.

An \dfn{ETL frame} is a tuple $\F=\langle E,H,\acc\rangle$ consisting of
a set of events~$E$, a protocol~$H$ and an epistemic accessibility relation~$\acc$.
We will usually omit~$E$ and~$H$ for the sake of clarity and implicitly assume that
any events or histories we talk about belong to~$E$ or~$H$, respectively.
An ETL frame can be viewed as a temporal \dfn{tree} (induced by the $\leadsto$ relation)
with epistemic accessibilities between nodes.
We will also consider an extension to \dfn{forests},
which are (finite) sets of trees with distinct roots and possibly interrelating epistemic accessibilities.%
\footnote{The distinction between ETL frames and forests corresponds
  to the \emph{unique initial state} condition in the interpreted systems literature.}
Most of our considerations apply both to trees and to forests,
and only in \cref{sec:sub-s5-settings} we have to make the distinction explicit.
We use properties of the epistemic accessibility relation to specify frames with a corresponding relation;
for example, by an \emph{S5 frame} we mean a frame with an S5 accessibility relation.

The \dfn{language} of ETL consists of a finite set $\mathsf{At}$ of propositional atoms
and of all formulas built from those according to the following grammar:
\[
p~|~\neg\varphi~|~\varphi\wedge\psi~|~K\varphi~|~\after e \varphi\mathpunct,
\]
where $p\in\mathsf{At}$ and $\varphi,\psi$ are formulas.
Intuitively, $K\varphi$ means that the agent knows $\varphi$,
and $\after e\varphi$ means that event $e$ can occur and afterwards $\varphi$ will hold.
The remaining propositional connectives are defined as abbreviations as usual,
and the duals of the modalities are denoted by~$L$ (dual of~$K$) and $\afterall e$ (dual of~$\after e$).
We write $\nextstep\phi$ to abbreviate $\bigvee_{e\in E}\after e\phi$.

A \dfn{valuation} $V:\mathsf{At}\to 2^H$ assigns to each atom the set of histories where it is true.
We write $\F,V,h\models\varphi$ for a frame~\F, a valuation~$V$ and a history~$h$ of the protocol of \F
to say that $\varphi$ is \dfn{satisfied} by~$\F,V,h$.
Satisfaction of formulas is defined inductively as usual,
starting with the truth values of atoms as given by~$V$,
and with the following semantics:
\begin{align*}
  \F,V,h&\models p &&\text{iff $h\in V(p)$}\\
  \F,V,h&\models \neg\varphi &&\text{iff $\F,V,h\not\models\varphi$}\\
  \F,V,h&\models \varphi\wedge\psi &&\text{iff $\F,V,h\models\varphi$ and $\F,V,h\models\psi$}\\
  \F,V,h&\models K\varphi &&\text{iff for each $h'\in H$ with $h\acc h'$: $\F,V,h'\models\varphi$}\\
  \F,V,h&\models \after e\varphi &&\text{iff $he\in H$ and $\F,V,he\models\varphi$}
\end{align*}
A formula~$\varphi$ is said to be \dfn{valid} on $\F,V$ \tiff $\F,V,h\models\varphi$ for all~$h\in H$.
It is said to be valid on $\F$ \tiff it is valid on $\F,V$ for all valuations~$V$.
It is said to \dfn{define} a class~$C$ of frames
\tiff it is valid exactly on the frames in~$C$.

For a binary relation $R$ on histories
we write $[h]_R=\{h'\suchthat hRh'\}$ to denote the \dfn{image} of~$h$ under~$R$.
If $H$ is a set of histories, we write $[H]_R$ for $\bigcup_{h\in H}[h]_R$.
Obviously, if $R$ is an equivalence relation, then $[h]_R$ is the equivalence class of $h$ with respect to~$R$.

\section{Perfect recall}
\label{sec:defin-perf-recall}

In this section, we review existing definitions of perfect recall,
give intuitive justifications for the notions,
and examine how they relate to each other.

Like in distributed systems, and unlike in game theory,
in ETL there is no notion of turns and no notion of agency associated with events.
An event is just an event and comes with no specification as to who performs it.
In game theory, turn-taking typically makes successive situations distinguishable
and perfect recall can often be formulated as ``remembers all his actions''.
In ETL, these features are not part of the model%
\footnote{Put differently,
events may perfectly well be caused externally and go completely unnoticed by any agent.
If we do want to attribute certain events to certain agents
(a view which certainly suggests itself in the single-agent case,
and is also customary in game theory),
then corresponding observability conditions for the agent performing a particular action
can be specified separately,
and our definitions of perfect recall will not interfere.}
and so we have to use other ways to express the notion.

Whether or not time is part of the agent's perception
is exactly what makes the difference between asynchronous and synchronous systems.
Synchronicity can be defined separately, if it is desired;
we are interested in perfect recall as an independent property,
which does not interfere with (a)synchronicity,
and thus should not assume or imply that agents perceive time.

\subsection{Basic definitions}
\label{sec:basic-definitions}

We start by giving some intuition about the notions we are going to define.
As mentioned, an agent with perfect recall can at any point
remember all the information that he had at any previous point in time,
and is able to exclude any possibilities for the current state of the world
which are inconsistent with that information.

\paragraph{Two different intuitions for perfect recall}
\label{sec:intuitions}

There are two related ways in which an agent might detect such inconsistencies,
the first on the level of epistemic states,
and the second on the level of the semantic structures that model them.%
\footnote{See \cref{sec:conclusions} for some discussion
related to the question of which aspects of the model an agent can access.}
Consider a perfect-recall agent in some state of the world, in ETL terms a history, $h$,
and some other history $h'$.

Firstly, if in state $h'$ the agent would have gone through a different sequence of epistemic states
than he actually has in $h$,
then he can exclude the possibility of $h'$, since he can recall all his epistemic states.

Secondly, if in the state before $h$ the agent was certain
that the world was \emph{not} in a state along history $h'$,
then at $h$ the agent can exclude the possibility that the world is in state $h'$,
since he can recall his previous assessments.
Put differently, if $h'$ is \emph{not} an extension of some history considered possible before,
then the agent can exclude the possibility of $h'$ since there would have been no way for the world to evolve to~$h$.

We formalize these two intuitions in the following.
As we will see, they are equivalent in the context of S5,
while in the general case, neither implies the other (see \cref{sec:sub-s5-settings}).

\paragraph{Formalizing the intuitions}
\label{sec:form-intu}

The first notion is the one most commonly used as starting point in the literature.
It uses the idea of local-state sequences, or ``epistemic experiences'',
meaning sequences of epistemic states that the agent has gone through.
Repetitions of identical states are ignored,
since the agent has no way of discriminating between two states in which he has the same epistemic state.
As in game theory (cf.~\cite[Section~11.1.3]{osborne_course_1994}),
we identify an agent's epistemic state with his information set,
i.e., the set of accessible worlds.

\begin{definition}
  \label{dfn:pr-ee}
  Given an ETL frame and a history $e_1\dots e_\ell$,
  the agent's \dfn{epistemic experience} is the sequence
  \[
  \EE(e_1\dots e_\ell):=[\epsilon]_{\acc}\ [e_1]_{\acc}\ [e_1e_2]_{\acc}\ \dots\ [e_1\dots e_\ell]_{\acc}
  \]
  of epistemic states (or information sets, in game theory parlance) he has gone through.

  We say that the epistemic experiences in two histories $h, h'$ are \dfn{equivalent modulo stutterings},
  in symbols $\EE(h)\approx\EE(h')$,
  \tiff the sequences with all repetitions of subsequent identical sets removed
  are equivalent.

  An ETL frame has \dfn{perfect recall with respect to epistemic experience (\PRee)} \tiff,
  whenever $h\acc h'$, we have $\EE(h)\approx\EE(h')$.
\end{definition}

The second definition is a slight (but equivalent) variant of a notion
most commonly used in the interpreted systems literature.%
\footnote{See~\cite[p.~204]{halpern_complexity_1989} (who call perfect recall ``no forgetting'')
and~\cite[Proposition~2.1(a)]{van_der_meyden_axioms_1993}.}
In the literature it has been mostly used as a technical condition characterizing \PRee,
but by rephrasing it we can provide it with an independent motivation.
\begin{definition}
  \label{dfn:pr-hc}
  An ETL frame has \dfn{perfect recall with respect to history consistency (\PRhc)} \tiff
  for any histories~$h,h'$ and event~$e$ with $he\acc h'$,
  there is some history $h''$ with $h\acc h''\leadsto^*h'$
  (i.e., some prefix of $h'$ is epistemically accessible from $h$).

  Put differently, the condition is that for each history $h$ and event $e$, we have
  \[
  [he]_\acc\subseteq[[h]_\acc]_{\leadsto^*}\mpunct.
  \]
\end{definition}
This second formulation suggests an intuitive reading:
A frame has \PRhc if all histories considered possible after some event
are extensions of histories considered possible before the event.

We refine this notion in \cref{sec:char-pras-etl}
to obtain one that is more fine-grained and detects more inconsistencies in sub-S5 settings;
however, in the context of S5 the notions are equivalent,
so for simplicity we stick with this definition for now.

\medskip

There are two further related conditions in the literature,
which we will state next.

\begin{definition}[{cf.~\cite{van_der_meyden_complete_2003,van_benthem_merging_2009}}] 
  An ETL frame has \dfn{synchronous perfect recall (\PRs)} \tiff
  for each $h,h',e$ with $he\acc h'$ there is $h''$ with $h\acc h''\leadsto h'$.
  Put differently,
  the condition is that for each history $h$ and event $e$, we have
  \[
  [he]_\acc\subseteq [[h]_\acc]_\leadsto\mpunct.
  \]
\end{definition}
That is, all histories considered possible after some event
are extensions of histories considered possible before the event \emph{by exactly one event}.

\begin{definition}[{cf.~\cite[p.~503, Definition~11]{van_benthem_merging_2009}}]
  An ETL frame has \dfn{weak synchronous perfect recall (\PRsprime)} \tiff
  for all $h,h',e,e'$ with $he\acc h'e'$, we have $h\acc h'$.
  Put differently, 
  for each history $h$ and event $e$ we have
  \[
  [he]_\acc\subseteq [[h]_\acc]_\leadsto\cup\{\epsilon\}\mpunct.
  \]
\end{definition}

It seems difficult to find an intuitive justification for this last definition:
Why should an agent with perfect recall be characterized to consider possible,
after some event, one-step extensions of histories previously considered possible
\emph{or the empty history}?
In order to get a better understanding of the various notions,
we now take a closer look at how they relate.

\subsection{Relating the notions}
\label{sec:relating-notions}

First of all,
as mentioned above,
the two notions of \PRee and \PRhc are equivalent in S5.
\begin{proposition}
  \label{result:pr-ee-equiv-pr-hc}
  An S5 ETL frame has \PRee \tiff it has \PRhc.
\end{proposition}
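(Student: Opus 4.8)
The plan is to prove both directions by unwinding the definitions, crucially exploiting the fact that in S5 the accessibility relation is an equivalence, so $[h]_\acc$ is genuinely the equivalence class of $h$ and in particular $h\in[h]_\acc$ for every history. The key structural observation I would establish first is the following reformulation of \PRee in S5: since information sets are equivalence classes, two histories $h,h'$ have $h\acc h'$ precisely when $[h]_\acc=[h']_\acc$, and consequently $\EE(h)\approx\EE(h')$ whenever the two stutter-free epistemic experiences coincide. I would then note that the epistemic experience records exactly the sequence of equivalence classes visited along a history, and that a ``stuttering'' occurs precisely at an event $e$ after which the equivalence class is unchanged, i.e. $[he]_\acc=[h]_\acc$.

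For the direction \PRee $\implies$ \PRhc, I would take histories $h,h'$ and an event $e$ with $he\acc h'$ and must produce some $h''$ with $h\acc h''\leadsto^* h'$. Using \PRee we have $\EE(he)\approx\EE(h')$, so the stutter-free sequences of equivalence classes along $he$ and along $h'$ agree; in particular the last classes agree, namely $[he]_\acc=[h']_\acc$. The idea is to walk back along the prefix chain of $h'$ and compare with the epistemic experience of $h$: because the stutter-free experiences of $he$ and of $h'$ match, the class $[h]_\acc$ (which is the penultimate distinct class of $\EE(he)$, or equals $[he]_\acc$ if the last step stuttered) must appear as the class of some prefix $h''$ of $h'$. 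That prefix $h''$ satisfies $[h]_\acc=[h'']_\acc$, hence $h\acc h''$ by symmetry and transitivity, and $h''\leadsto^* h'$ since it is a prefix. I would handle the two cases (whether or not the step from $h$ to $he$ stutters) explicitly, as they determine which class of the common stutter-free sequence we are tracking.

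For the converse \PRhc $\implies$ \PRee, I would proceed by induction on the length of a history, or equivalently argue that \PRhc forces the stutter-free class sequences of any two $\acc$-related histories to coincide. Given $h\acc h'$, reflexivity gives $h\acc h\leadsto^* h$, and I would use \PRhc at each event along $h$ together with the S5 properties to show that every distinct equivalence class occurring in $\EE(h)$ also occurs, in the same order, in $\EE(h')$, and vice versa by symmetry of $\acc$. The monotonicity-like content of \PRhc---that every history considered possible after an event extends one considered possible before---is what guarantees that classes cannot be skipped or reordered, which is exactly the matching of stutter-free sequences that $\approx$ requires.

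I expect the main obstacle to be the careful bookkeeping of the stuttering equivalence $\approx$: the statement $\EE(h)\approx\EE(h')$ only compares the sequences after collapsing consecutive repeated sets, so I must argue at the level of these collapsed sequences rather than position-by-position, and in particular track how a single event either advances to a new equivalence class or leaves it fixed. Making precise the claim that \PRhc yields an order-preserving correspondence between the distinct classes along $h$ and those along $h'$---without an explicit clock to align positions, since we are not assuming synchronicity---is the delicate step; everything else reduces to routine use of reflexivity, symmetry, and transitivity of the S5 relation.
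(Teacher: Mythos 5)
Your proposal is correct; the first direction matches the paper's argument, but your second direction takes a genuinely different route, so the comparison is worth spelling out. For \PRee implies \PRhc, both you and the paper split on whether $[h]_\acc=[he]_\acc$: in the stuttering case $h\acc h'$ holds directly, and otherwise the penultimate distinct class of the common stutter-free sequence, namely $[h]_\acc$, must be realized by a proper prefix of $h'$, which is then accessible from $h$. For \PRhc implies \PRee, the paper proceeds by induction on the \emph{sum} of the lengths of the two histories, applying \PRhc twice per inductive step---once to $he\acc h'$ and once, via symmetry, to $h'\acc he$---and closing each case with a sandwich argument: from $\EE(h)\approx\EE(h'')$ with $h''\preceq g$ and $\EE(g)\approx\EE(h''')$ with $h'''\preceq h$ it extracts $\EE(h)\approx\EE(g)$, using that mutual prefixes (modulo stuttering) coincide. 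You instead propose a global, non-inductive correspondence, and your plan is soundly dischargeable exactly where you flagged the delicacy: iterating \PRhc backward along the successive prefixes $\epsilon=h_0\prec h_1\prec\dots\prec h_n=h$ yields a monotone chain of prefixes $g_0\preceq g_1\preceq\dots\preceq g_n=h'$ with $h_i\acc g_i$, hence $[h_i]_\acc=[g_i]_\acc$; whenever $[h_i]_\acc\neq[h_{i+1}]_\acc$ we must have $g_i\prec g_{i+1}$, so the stutter-free sequence of $\EE(h)$ embeds as a subsequence of that of $\EE(h')$; symmetry of $\acc$ gives the converse embedding, and two finite sequences that are subsequences of each other are equal. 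What each approach buys: yours makes the order-preserving structure explicit, avoids the paper's two-sided case analysis, and needs no induction on pairs, at the price of two small combinatorial lemmas (a consecutive-distinct sample taken at increasing positions is a subsequence of the stutter-collapse, and mutual subsequences coincide); the paper's induction keeps everything local to one event step, but the bookkeeping in its second case (introducing $g$, $h''$, $h'''$ and sandwiching) is correspondingly more intricate.
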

\begin{proof}
  This follows immediately from \cite[Proposition~2.1]{van_der_meyden_axioms_1993},
  instantiating what the interpreted systems literature calls ``local states''
  by the set of accessible worlds in ETL frames.

  To give a version of the proof,
  we consider two histories and show that the conditions of \PRee and \PRhc are equivalent.
  We first show that \PRee implies \PRhc.
  For two empty histories this is obvious,
  so w.l.o.g.~we assume that the first history is non-empty.
  So assume that \PRee holds and that $he\acc h'$.
  We have either of two cases:
  \begin{enumerate}[(i)]
  \item $[h]_\acc=[he]_\acc$. But then $h\acc h'$, and \PRhc is satisfied.
  \item $[h]_\acc\neq [he]_\acc$.
    By \PRee, we must have $\EE(he)\approx\EE(h')$,
    and thus there must be $h''\leadsto h'$ with $[h]_\acc=[h'']_\acc$.
    So again, \PRhc is satisfied.
  \end{enumerate}
  To see that \PRhc implies \PRee, we proceed by induction on the sum of the lengths of the two histories.
  The base case with both empty is straightforward.
  For the induction step, assume that \PRhc holds.
  W.l.o.g.~we assume that the first history is non-empty,
  so we consider $he$ and $h'$ for some event~$e$, with $he\acc h'$.
  \PRhc yields one of two cases:
  \begin{enumerate}[(i)]
  \item $h\acc h'$. Then the induction hypothesis yields $\EE(h)\approx\EE(h')$.
    Furthermore, we have $[h]_\acc=[h']_\acc=[he]_\acc$, so $\EE(h)\approx\EE(he)$.
    Taken together, we obtain $\EE(he)\approx\EE(h')$, and \PRee is satisfied.
  \item $h\acc h''$ for some $h''\prec h'$.
    It then follows that $h'$ is non-empty.
    Let $g$ be its direct predecessor, i.e., $g\leadsto h'$.
    From $h'\acc he$, with \PRhc it follows that there is $h'''\preceq he$ such that $g\acc h'''$.
    If $h'''=he$, then \PRee follows analogously as in the previous case.
    Otherwise we have $h'''\preceq h$.
    The induction hypothesis yields $\EE(h)\approx\EE(h'')$
    as well as $\EE(g)\approx\EE(h''')$.
    Since $h''\preceq g$ and $h'''\preceq h$,
    we must have $\EE(h)\approx\EE(g)$.
    Since $h\leadsto he$ and $g\leadsto h'$, together with $[he]_\acc=[h']_\acc$
    we obtain that $\EE(he)\approx\EE(h')$.\qedhere
  \end{enumerate}
\end{proof}

As we see in the following, the remaining two notions, \PRs and \PRsprime,
are similar to these but impact another property,
namely that of \emph{synchronicity}.
\PRee and \PRhc, on the other hand, do not interfere with synchronicity.
In agreement with the interpreted systems and game theory literature,
we therefore use them as fundamental definitions of perfect recall in the context of S5,
and due to their equivalence, we use \dfn{perfect recall (\PR)} to refer to both.

\medskip

In the presence of synchronicity, not surprisingly, all notions are equivalent.

\begin{proposition}
  \label{result:sync-pr-ee-equiv-pr-s}
  On synchronous S5 ETL frames, all of the above notions (\PRee, \PRhc, \PRs, \PRsprime) are equivalent.
\end{proposition}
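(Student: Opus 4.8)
The plan is to route all four equivalences through \PRs, which on S5 frames is the strongest of the conditions, and to isolate exactly where S5 and where synchronicity are used. First I would note that \cref{result:pr-ee-equiv-pr-hc} already delivers \PRee $\iff$ \PRhc; this uses S5 but not synchronicity. Hence it suffices to prove that \PRhc, \PRs, and \PRsprime coincide on synchronous frames. Two of the needed implications require no assumptions at all. Since $\leadsto\ \subseteq\ \leadsto^*$, the inclusion $[he]_\acc\subseteq[[h]_\acc]_\leadsto$ defining \PRs immediately entails $[he]_\acc\subseteq[[h]_\acc]_{\leadsto^*}$, so \PRs $\implies$ \PRhc. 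And if $he\acc h'e'$, then the \PRs inclusion gives some $h''$ with $h\acc h''\leadsto h'e'$; since $h''\leadsto h'e'$ means $h'e'$ extends $h''$ by one event, we get $h''=h'$ and hence $h\acc h'$, so \PRs $\implies$ \PRsprime.

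The remaining two implications are where synchronicity does the work. For \PRhc $\implies$ \PRs, suppose $he\acc h'$. By \PRhc there is a prefix $h''\preceq h'$ with $h\acc h''$; synchronicity forces $\abs{h''}=\abs{h}$ and $\abs{h'}=\abs{he}=\abs{h}+1$, so $\abs{h''}=\abs{h'}-1$ and therefore $h''$ is the immediate predecessor of $h'$, i.e.\ $h''\leadsto h'$, which is exactly the \PRs requirement. For \PRsprime $\implies$ \PRs, suppose again $he\acc h'$. Synchronicity gives $\abs{h'}=\abs{h}+1\ge 1$, so $h'$ is non-empty and can be written $h'=ge'$ with $g\leadsto h'$; then $he\acc ge'$, and \PRsprime yields $h\acc g\leadsto h'$, again exactly \PRs.

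Chaining these observations gives \PRhc $\iff$ \PRs $\iff$ \PRsprime under synchronicity, and combined with \cref{result:pr-ee-equiv-pr-hc} all four notions coincide. I do not expect a genuine obstacle: the only points requiring care are the two length-bookkeeping arguments, where synchronicity must be applied in precisely the right way. Concretely, synchronicity is used to upgrade the reflexive–transitive step $\leadsto^*$ of \PRhc to a single step $\leadsto$, and to rule out the anomalous empty-history term $\{\epsilon\}$ tolerated by \PRsprime, since $\epsilon$ has length $0$ whereas every history accessible from a non-empty $he$ must have length $\abs{h}+1$.
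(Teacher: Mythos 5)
Your proof is correct and follows essentially the same route as the paper: the paper's proof simply invokes \cref{result:pr-ee-equiv-pr-hc} for the \PRee/\PRhc equivalence and declares the rest ``straightforward'' (deferring to the literature), which is exactly the length-bookkeeping you carry out explicitly for \PRhc~$\Leftrightarrow$~\PRs~$\Leftrightarrow$~\PRsprime under synchronicity. Your write-up just makes the implicit details precise, correctly isolating that S5 is needed only for the \PRee/\PRhc step and synchronicity only for upgrading $\leadsto^*$ to $\leadsto$ and excluding the $\{\epsilon\}$ case.
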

\begin{proof}
  Straightfoward with \cref{result:pr-ee-equiv-pr-hc};
  see also \cite{halpern_complete_2004,van_der_meyden_complete_2003}.
\end{proof}

It is easy to see that \PRs implies synchronicity,
since $\leadsto$ is well-founded.%
\footnote{One formulation of $\PRs$ is (cf.~\cref{result:pr-s-definable}):
``If the agent knows that at the next time point $p$ holds, then at the next time point he knows that $p$ holds.''
Intuitively this implies that the mere ticking of the clock affects the agent's mental state.}
Therefore, in general on S5 frames, \PRs is equivalent to \PR \emph{plus} synchronicity.
As for \PRsprime, the relationship is a bit less clear-cut.

\begin{proposition}
  On S5 ETL frames, \PRsprime implies \PR, but not vice versa.
  Put differently, \PRsprime is sound but incomplete with respect to \PR.
\end{proposition}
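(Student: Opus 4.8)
The plan is to prove the two halves separately, and for the positive direction to reduce \PR to \PRhc. Since we are on S5 frames, \cref{result:pr-ee-equiv-pr-hc} lets us identify \PR with \PRhc throughout, so it suffices to show that \PRsprime implies \PRhc. I would argue directly from the ``put differently'' set-inclusion characterizations: \PRsprime says $[he]_\acc\subseteq[[h]_\acc]_\leadsto\cup\{\epsilon\}$ for every history $h$ and event $e$, whereas \PRhc says $[he]_\acc\subseteq[[h]_\acc]_{\leadsto^{*}}$.

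The bulk of the inclusion is immediate. Since $\leadsto\,\subseteq\,\leadsto^{*}$, we have $[[h]_\acc]_\leadsto\subseteq[[h]_\acc]_{\leadsto^{*}}$, so every non-empty history that \PRsprime admits into $[he]_\acc$ is already a (one-step, hence legitimate) extension of a history accessible from $h$, exactly as \PRhc requires. Concretely, given $he\acc h'e'$ with $h'e'$ non-empty, \PRsprime yields $h\acc h'$, and $h'\leadsto^{*}h'e'$, so $h'$ is a witnessing prefix for \PRhc.

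The delicate part, and what I expect to be the main obstacle, is the extra disjunct $\{\epsilon\}$ that \PRsprime tolerates. To push the inclusion through I must handle the case $he\acc\epsilon$: then \PRhc demands a prefix of $\epsilon$ accessible from $h$, and the only prefix of $\epsilon$ is $\epsilon$ itself, so I would need $h\acc\epsilon$. This does \emph{not} follow from a single application of \PRsprime, since \PRsprime only constrains pairs of histories that both end in an event, and $\epsilon$ does not. The real work would be to determine, using reflexivity, symmetry, and transitivity of $\acc$ together with \PRsprime applied along the prefixes of $he$, whether $he\acc\epsilon$ forces $h\acc\epsilon$; equivalently, whether the root-level asynchrony that \PRsprime permits can ever separate $h$ from $\epsilon$. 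I would scrutinize this corner carefully, as it is precisely where the positive direction is sharp and where the argument may hinge on a convention about the root (and where a degenerate frame could otherwise threaten the implication).

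For the converse, i.e.\ that \PRsprime is incomplete with respect to \PR, the plan is to exhibit a single asynchronous S5 \PR-frame that violates \PRsprime. A minimal witness is $E=\{a\}$, protocol $H=\{\epsilon,a,aa,aaa\}$, with $\acc$ the equivalence relation whose classes are $\{\epsilon\}$, $\{a\}$, and $\{aa,aaa\}$. I would verify \PRhc (hence \PR) by checking the three inclusions $[he]_\acc\subseteq[[h]_\acc]_{\leadsto^{*}}$ directly; the only non-trivial one, at $h=a$, holds because $a$ is itself a prefix of both $aa$ and $aaa$. The same frame fails \PRsprime at $h=a$: after the event $a$ the agent considers $aaa$ possible, yet $aaa$ is a two-step extension of $a$---the only history accessible before the event---and is not $\epsilon$, so $aaa\notin[[a]_\acc]_\leadsto\cup\{\epsilon\}$. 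This also exposes the conceptual reason for incompleteness: \PRsprime smuggles in a near-synchronicity requirement (extension by exactly one step) that genuine, asynchrony-tolerant perfect recall does not impose.
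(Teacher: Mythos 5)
Your second half (incompleteness) is correct and takes exactly the paper's route: the paper, too, simply exhibits an asynchronous S5 frame with \PR that violates \PRsprime (a branching one, in which $e_1e_3\acc e_2e_3$ but $e_1\not\acc e_2$); your linear frame with classes $\{\epsilon\}$, $\{a\}$, $\{aa,aaa\}$ does the same job, and your verification of \PRhc and of the \PRsprime-failure at $aa\acc aaa$ is sound. Likewise, identifying \PR with \PRhc via the paper's equivalence of \PRhc and \PRee on S5 frames is the right set-up.

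The first half (soundness) is where your proposal stops short, and the corner you flagged is not merely delicate --- it is fatal. The answer to the question you pose (``does $he\acc\epsilon$ force $h\acc\epsilon$ under S5 plus \PRsprime?'') is \emph{no}, so the inclusion cannot be pushed through. Concretely, take $E=\{a\}$, $H=\{\epsilon,a,aa\}$, and let $\acc$ be the equivalence relation with classes $\{\epsilon,aa\}$ and $\{a\}$. Then \PRsprime holds: the only $\acc$-related pairs in which both histories end in an event are $(a,a)$ and $(aa,aa)$, requiring only $\epsilon\acc\epsilon$ and $a\acc a$, while the pair $aa\acc\epsilon$ escapes \PRsprime entirely because $\epsilon$ is not of the form $h'e'$. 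Yet \PRhc fails: $aa\acc\epsilon$ demands a prefix of $\epsilon$ accessible from $a$, i.e.\ $a\acc\epsilon$, whereas $[a]_\acc=\{a\}$; and \PRee fails too, since $\epsilon\acc aa$ but $\EE(\epsilon)=([\epsilon]_\acc)$ and $\EE(aa)=([\epsilon]_\acc,[a]_\acc,[aa]_\acc)$ are not equivalent modulo stutterings. So no amount of ``real work'' would have closed your gap: the implication as stated is false on the paper's own definitions. The paper's proof offers no help here --- it dismisses soundness as ``straightforward'' --- and its later remark concedes unsoundness only on \emph{forests}, where a history may be related to the root of a different tree; your $\epsilon$-case exhibits the identical phenomenon already inside a single tree, with a history related to its own root. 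The claim becomes true (and your argument for non-empty $h'$ completes it) only under an extra hypothesis excluding $\epsilon\in[he]_\acc$, e.g.\ synchronicity or an added stipulation that $he\acc\epsilon$ implies $h\acc\epsilon$.
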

\begin{proof}
  Soundness is straightforward;
  for incompleteness, see \cref{fig:pr-s-prime-pr-hc}\subref{fig:pr-s-prime-pr-hc:incomp}.
\end{proof}

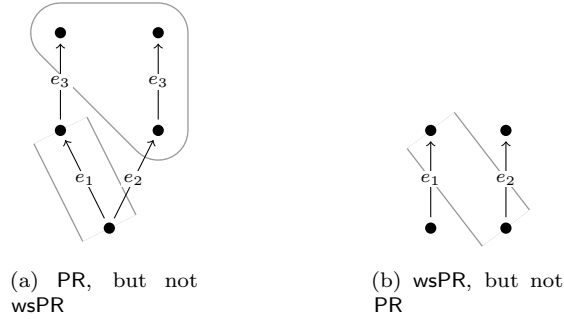
\begin{figure}
  \centering
  \hfill
  \subfloat[\PR, but not \PRsprime]{
    \label{fig:pr-s-prime-pr-hc:incomp}
    \begin{tikzpicture}[etltree]
      \node[unlabeled] (root) {}
        child foreach \c in {1,2} {
          node {} child { node {} edge from parent node {$e_3$} }
          edge from parent node {$e_\c$}
        };
      \mkinfset{(root-1-1.center) -- (root-2-1.center) -- (root-2.center)}
      \mkinfset{(root-1.center) -- (root.center)}
    \end{tikzpicture}
  }
  \hfill
  \subfloat[\PRsprime, but not \PR]{
    ~
    \label{fig:pr-s-prime-pr-hc:nsound}
    \begin{tikzpicture}[etltree]
      \node[unlabeled] (root1) {}
        child { node {} edge from parent node {$e_1$} };
      \node[unlabeled,right of=root1] (root2) {}
        child { node {} edge from parent node {$e_2$} };
      \mkinfset{(root1-1.center) -- (root2.center)}
    \end{tikzpicture}
    ~
  }
  \hfill~
  \caption{Two frames, gray lines indicating information sets.
    \subref{fig:pr-s-prime-pr-hc:incomp} \PRsprime is incomplete with respect to \PR
    (\PRsprime is violated because $e_1e_3\acc e_2e_3$ but $e_1\not\acc e_2$),
    and \subref{fig:pr-s-prime-pr-hc:nsound} \PRsprime is not sound with respect to \PR on ETL forests:
    Intuitively, event $e_1$ lets the agent ``forget'' that he is in the left tree.
    Note that on forests the definition of \PRsprime has to be adjusted by replacing $\{\epsilon\}$
    by the set of all roots.}
  \label{fig:pr-s-prime-pr-hc}
\end{figure}

In that sense, \PRsprime is ``somewhere in between'' \PRs and \PR:
It is strictly implied\footnote{By this we mean ``implies, but is not equivalent to''.}
by \PRs and strictly implies \PR.
What distinguishes \PRsprime is that,
while it (unlike \PRs) does not \emph{imply} synchronicity,
it does (unlike \PR) \emph{presuppose} it,
in that it fails to classify asynchronous frames correctly---%
given that, as we have argued, \PR captures the intuition of perfect recall as an independent property.
We therefore neglect \PRsprime in the further discussion:
Synchronous perfect recall frames are naturally captured by \PRs
and asynchronous ones by \PR,
while \PRsprime has no clear domain of application in our context.

\begin{numberedremark}
  Except where noted, all our considerations carry over to ETL forests
  (sets of ETL frames with possibly interrelating indistinguishabilities).
  The first such note is the fact that on forests,
  \PRsprime is not only incomplete with respect to \PR but also not sound,
  as shown in \cref{fig:pr-s-prime-pr-hc}\subref{fig:pr-s-prime-pr-hc:nsound}.
\end{numberedremark}

Note that the ``local histories'' of \citet{parikh_knowledge_2003}
give rise to frames in which exactly those histories are indistinguishable
in which the epistemic experiences are equivalent modulo stutterings,
so that their frames inherently satisfy \PR.%
\footnote{In fact, the frames they obtain are exactly those S5 frames
  which satisfy \PR and a ``fixed observability'' condition
  stating that, for each agent, there is a fixed set of events
  whose occurrence he is able to distinguish from the (non-)occurrence of others.}
The version used by \citet{pacuit_logic_2006} analogously produces frames
which inherently satisfy \PRs (and thus synchronicity).

\medskip

Finally, we note that \PRs is definable in ETL in a neat way, which also gives rise to an axiomatization.

\begin{proposition}
\label{result:pr-s-definable}
  An S5 ETL frame has \PRs \tiff it validates the following formula:
  \[
  \nextstep L p\implies L \nextstep p
  \]
  Together with S5, this axiomatizes ETL on synchronous S5 frames with perfect recall.
\end{proposition}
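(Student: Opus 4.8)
The plan is to split the statement into its frame-definability part (the biconditional) and its axiomatization part, the latter building on the former. For the definability I would unfold the semantics directly, recalling that $\nextstep\phi$ abbreviates $\bigvee_{e\in E}\after{e}\phi$ and that $L$ is the existential dual of $K$. First I would show that \PRs forces validity of the formula. Fix a valuation $V$ and a history $h$ with $\F,V,h\models\nextstep Lp$; then there is an event $e$ with $he\in H$ and a history $h'$ with $he\acc h'$ and $\F,V,h'\models p$. Applying \PRs to $he\acc h'$ yields an $h''$ with $h\acc h''\leadsto h'$, so $h'=h''e'$ where $e'$ is the last event of $h'$. Hence $\F,V,h''\models\after{e'}p$, so $\F,V,h''\models\nextstep p$, and since $h\acc h''$ we obtain $\F,V,h\models L\nextstep p$. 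As $V$ and $h$ were arbitrary, the formula is valid.

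For the converse I would argue contrapositively by exhibiting a refuting valuation. Suppose \PRs fails, so there are $h,e,h'$ with $he\acc h'$ but no $h''$ satisfying $h\acc h''\leadsto h'$. Choose $V$ with $V(p)=\{h'\}$. Then $he\acc h'$ together with $\F,V,h'\models p$ gives $\F,V,he\models Lp$, whence $\F,V,h\models\nextstep Lp$. On the other hand, $\F,V,h\models L\nextstep p$ would require some $h''$ with $h\acc h''$ and some event $e'$ with $\F,V,h''e'\models p$; since $p$ holds only at $h'$, this forces $h''e'=h'$, i.e.\ $h\acc h''\leadsto h'$, contradicting the failure of \PRs. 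Thus the formula is refuted at $h$ under $V$, and so validity of the formula implies \PRs. The only point requiring a little care is the finite-disjunction unfolding of $\nextstep$, which is routine.

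For the axiomatization, soundness is immediate from the direction just proved together with soundness of S5 and of the base ETL principles governing the $\after{e}$ operators on tree models. Since \PRs already entails synchronicity, the single axiom $\nextstep Lp\implies L\nextstep p$ captures perfect recall and synchronicity simultaneously, so the class it defines over S5 is exactly the synchronous S5 frames with perfect recall and no separate clock axiom is needed. For completeness I would pass to the canonical model: the axiom has Sahlqvist shape---its antecedent and consequent are both positive and built from the diamond modalities $\after{e}$ and $L$ applied to the atom $p$---hence it is canonical and its local first-order correspondent is precisely \PRs. I expect the main obstacle to lie not in this correspondence but in reshaping the canonical model into a genuine ETL frame---unravelling it into a tree (or forest) with a prefix-closed protocol while preserving the S5 relation and the \PRs condition, and checking that truth is preserved. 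This unravelling, rather than the definability argument, is where the real work lies; it can be carried out along the lines of the completeness proofs for synchronous S5 ETL in \cite{halpern_complete_2004,van_der_meyden_complete_2003}.
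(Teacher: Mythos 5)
Your proof is correct, and it is more self-contained than the paper's, which offers no argument of its own for this proposition: the paper merely notes the result is well known (citing \cite{van_der_meyden_complete_2003,halpern_complete_2004}) and remarks that it ``can be proved using \citet{sahlqvist_completeness_1975}, similar to the proof of \cref{result:pr-hcl-definable}''---that is, by passing to the dual boxed form, observing the axiom is Sahlqvist, and extracting the first-order correspondent as the minimal instantiation of the second-order translation. Your direct semantic verification of the frame correspondence is the same mathematics done by hand: the refuting valuation $V(p)=\{h'\}$ in your contrapositive direction is exactly the Sahlqvist minimal instantiation, and making it explicit buys an elementary, citation-free proof of the ``defines'' half of the statement. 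Two remarks. First, your justification of the Sahlqvist shape is stated too loosely: a positive antecedent is not sufficient (the McKinsey formula $\nextstepall\nextstep p\implies\nextstep\nextstepall p$ has a positive antecedent and is not Sahlqvist); what makes your axiom Sahlqvist is that its antecedent $\nextstep L p$ is a string of diamonds applied to an atom. Second, and to your credit, your closing point identifies a step the paper silently elides: Sahlqvist completeness yields completeness over the class of \emph{all} Kripke frames of the relevant similarity type satisfying the correspondent, whereas the proposition asserts completeness over ETL frames---trees with deterministic $\after e$ relations and prefix-closed protocols, a proper subclass---and bridging that gap requires precisely the canonical-model surgery (unravelling into a tree while preserving S5 and \PRs, and preservation of truth) that you defer to \cite{halpern_complete_2004,van_der_meyden_complete_2003}. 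So on the definability half you give a genuinely more elementary argument than the paper's Sahlqvist computation, and on the axiomatization half you follow the same route but are more explicit about what Sahlqvist's theorem does and does not deliver.
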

This result is well-known and is analogous to
\cite[Theorem~4]{van_der_meyden_complete_2003} and \cite[Theorem~3.6]{halpern_complete_2004},
since \PRs characterizes synchronous frames with perfect recall.%
\footnote{Compare also the Cross Axiom of \citet{dabrowski_topological_1996}.}
It can be proved using \citet{sahlqvist_completeness_1975},
similar to the proof of \cref{result:pr-hcl-definable}.

\medskip

A corresponding characterization of \PR has been given by \citet{van_der_meyden_axioms_1993},
but that characterization uses the ``until'' operator.
To our knowledge, a version using only ``short-sighted'' next-step temporal modalities
has not been discussed.
Providing it is the aim of the next section.

\section{Defining perfect recall in ETL}
\label{sec:char-pras-etl}

We start by giving a ``local'' version of perfect recall with respect to history consistency,
whose intuitive interpretation gives a more fine-grained account of the agent's possibilities for reasoning
than \PRhc does.
As we will see in \cref{sec:sub-s5-settings}, in general the notions are not equivalent,
and the fine-grainedness of the version we propose indeed makes a difference.
In the current section, however, we show that in the context of S5
this definition is equivalent to the preceding ones,
and we exploit its locality to find an axiomatization of perfect recall
using only next-step temporal modalities.

\begin{definition}
\label{dfn:pr-hcl}
  An ETL frame has \dfn{\PRhc, local version (\PRhcl)} \tiff for each history $h$ and event $e$, we have
  \[
  [he]_\acc\subseteq [h]_\acc\cup [[h]_\acc]_\leadsto\cup [[he]_\acc]_\leadsto
  \]
  Put differently, for each $h,h',e$ with $he\acc h'$, either of the following holds:
  \begin{enumerate}[(i)]
  \item\label{dfn:pr-hcl:1} $h\acc h'$
  \item\label{dfn:pr-hcl:2} $h\acc h''\leadsto h'$ for some $h''$
  \item\label{dfn:pr-hcl:3} $he\acc h''\leadsto h'$ for some $h''$.
  \end{enumerate}
\end{definition}

\begin{figure}
  \renewcommand{\thesubfigure}{\relax}
  \captionsetup[subfloat]{labelformat=simple}
  \centering
  \hfill
  \subfloat[if $he\acc h'$, then]{
    \hspace{5mm}
    \begin{tikzpicture}[etltree]
      \node[labeled] (root1) {$h$}
        child {
          node[labeled] {$he$}
          edge from parent node {$e$}
        };
      \node[labeled,right of=root1-1] (root2) {$h'$};
      \draw[accrel] (root1-1) edge[lr] (root2);
    \end{tikzpicture}
    \hspace{5mm}
  }
  \hfill
  \subfloat[(i) $h\acc h'$, or]{
    \hspace{5mm}
    \begin{tikzpicture}[etltree]
      \node[labeled] (root1) {$h$}
        child {
          node[labeled] {$he$}
          edge from parent node {$e$}
        };
      \node[labeled,right of=root1-1] (root2) {$h'$};
      \draw[accrel] (root1-1) edge[lr] (root2)
                    (root1) edge[lr,dashed] (root2.220);
    \end{tikzpicture}
    \hspace{5mm}
  }
  \hfill
  \subfloat[(ii) $h\acc h''\leadsto h'$, or]{
    \hspace{5mm}
    \begin{tikzpicture}[etltree]
      \node[labeled] (root1) {$h$}
        child {
          node[labeled] {$he$}
          edge from parent node {$e$}
        };
      \node[labeled,right of=root1] (root2) {$h''$}
        child {
          node[labeled] {$h'$}
          edge from parent
        };
      \draw[accrel] (root1-1) edge[lr] (root2-1)
                    (root1) edge[lr,dashed] (root2);
    \end{tikzpicture}
    \hspace{5mm}
  }
  \hfill
  \subfloat[(iii) $he\acc h''\leadsto h'$]{
    \hspace{5mm}
    \begin{tikzpicture}[etltree]
      \node[labeled] (root1) {$h$}
        child {
          node[labeled] {$he$}
          edge from parent node {$e$}
        };
      \node[labeled,right of=root1] (root2) {$h''$}
        child {
          node[labeled] {$h'$}
          edge from parent
        };
      \draw[accrel] (root1-1) edge[lr] (root2-1)
                    (root1-1) edge[lr,dashed] (root2.120);
    \end{tikzpicture}
    \hspace{5mm}
  }
  \hfill~
  \caption{Illustrating \PRhcl. Gray, bent arrows indicate accessibilities, showing the directed versions for the sake of illustration.}
  \label{fig:prhcl}
\end{figure}
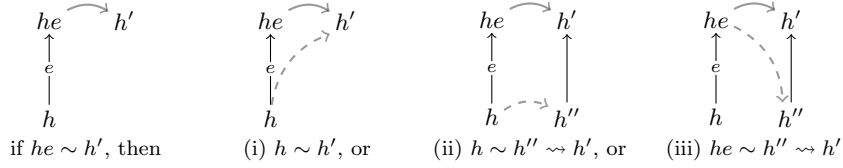

See \cref{fig:prhcl} for an illustration of this definition.
Let us walk through it and compare it with \PRhc.
We consider the history $he$ and refer to $h$ as ``before~$e$'' and to~$he$ as ``after~$e$''.
The definition says that any history considered possible after~$e$
either
\begin{enumerate}[(i)]
\item was considered possible already before~$e$,
  i.e., the agent didn't notice~$e$, nor time passing; or
\item is an extension by one event of a history considered possible before~$e$,
  i.e., the agent correctly thinks one event occurred, though he may not be certain which one; or
\item is an extension by one event of another history considered possible after~$e$.
\end{enumerate}
As illustrated in \cref{fig:bridging-hc-hcl}, this last condition inductively bridges the gap to \PRhc,
and allows the agent to consider possible that several events occur while really just~$e$ is happening.
The difference, as compared to \PRhc, is that now there is a stricter consistency requirement.
If the agent considers possible that several events have happened,
he must obviously be unable to detect some of them, since really just one event happened.
Given that, he must also consider possible the intermediate histories along these several events---%
either from the history after~$e$ or from before~$e$.
Exactly this consistency requirement is inductively captured by the last condition
(in interplay with the second condition).

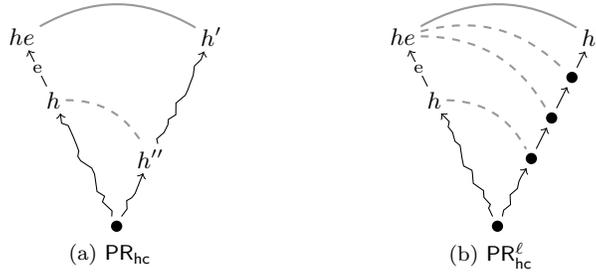
\begin{figure}
  \centering
  \hfill
  \subfloat[\PRhc]{
   \pgfmathsetseed{55}
   \begin{tikzpicture}[etltree,scale=.6]
      \node[unlabeled] (root) {};
      \path (root) ++(-1.4cm,2.8cm) node[labeled] (h) {$h$}
                   ++(-.7cm,1.4cm) node[labeled] (he) {$he$};
      \path (root) ++(.75cm,1.5cm) node[labeled] (h'') {$h''$}
                   ++(1.35cm,2.7cm) node[labeled] (h') {$h'$};

      \draw[eventrel] (h) -- node {e} (he);
      { [eventrel,decoration={random steps,segment length=1mm,amplitude=.5mm}]
        \draw[decorate] (root) -- (h); 
        \draw[decorate] (root) -- (h''); 
        \draw[decorate] (h'') -- (h');
      }

      \path (he) edge[accrel,lr,-] (h');
      \path (h) edge[accrel,dashed,lr,-] (h'');
    \end{tikzpicture}
  }
  \hfill
  \subfloat[\PRhcl]{
   \pgfmathsetseed{5}
   \begin{tikzpicture}[etltree,scale=.6]
      \node[unlabeled] (root) {};
      \path (root) ++(-1.4cm,2.8cm) node[labeled] (h) {$h$}
                   ++(-.7cm,1.4cm) node[labeled] (he) {$he$};
      \path (root) ++(.75cm,1.5cm) node[unlabeled] (h''1) {}
                   ++(.45cm,.9cm) node[unlabeled] (h''2) {}
                   ++(.45cm,.9cm) node[unlabeled] (h''3) {}
                   ++(.45cm,.9cm) node[labeled] (h') {$h'$};

      \draw[eventrel] (h) -- node {e} (he);
      \draw[eventrel] (h''1) -- (h''2);
      \draw[eventrel] (h''2) -- (h''3);
      \draw[eventrel] (h''3) -- (h');
      { [eventrel,decoration={random steps,segment length=1mm,amplitude=.5mm}]
        \draw[decorate] (root) -- (h); 
        \draw[decorate] (root) -- (h''1); 
      }

      \path (he) edge[accrel,lr,-] (h');
      \path (he) edge[accrel,dashed,lr,-] (h''3);
      \path (he) edge[accrel,dashed,lr,-] (h''2);
      \path (h) edge[accrel,dashed,lr,-] (h''1);
    \end{tikzpicture}
  }
  \hfill~
  \caption{\PRhcl inductively bridges the gap to \PRhc, but it imposes additional conditions on the intermediate states.
    However, in S5 the two conditions are equivalent.}
  \label{fig:bridging-hc-hcl}
\end{figure}

Intuitively it is thus clear that \PRhcl is at least as strong a condition as \PRhc,
as formalized in the next result.

\begin{lemma}
  \label{result:pr-hcl-implies-pr-hc}
  Any ETL frame that has \PRhcl also has \PRhc.
\end{lemma}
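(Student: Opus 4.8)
The plan is to establish \PRhc directly from \PRhcl, the key idea being that condition~(iii) of \PRhcl can be ``unfolded'' step by step until one of the other two conditions applies. Concretely, fix a history $h$, an event $e$, and a history $h'$ with $he\acc h'$; we must produce a prefix of $h'$ accessible from $h$, i.e.\ some $h''$ with $h\acc h''\leadsto^* h'$. I would prove this by induction on the length $\abs{h'}$ of the accessed history, keeping $h$ and $e$ fixed throughout.

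For the base case $h'=\epsilon$, neither condition~(ii) nor~(iii) of \PRhcl can hold, since no history satisfies $h''\leadsto\epsilon$. Hence \PRhcl forces condition~(i), namely $h\acc h'$, and taking $h''=h'$ (using reflexivity of $\leadsto^*$) gives the claim. For the inductive step, apply \PRhcl to $he\acc h'$. If condition~(i) holds, then $h\acc h'\leadsto^* h'$ and we are done; if condition~(ii) holds, then $h\acc h''\leadsto h'$, and since $\leadsto\,\subseteq\,\leadsto^*$ this again yields a prefix of $h'$ accessible from $h$.

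The only remaining---and crucial---case is condition~(iii): there is some $h''$ with $he\acc h''\leadsto h'$. Here $h''$ is a one-step prefix of $h'$, so $\abs{h''}=\abs{h'}-1$, and $he\acc h''$ is an instance of the same form with a strictly shorter accessed history. The induction hypothesis therefore yields a history $g$ with $h\acc g\leadsto^* h''$. Since $h''\leadsto h'$ implies $h''\preceq h'$, we have $g\preceq h''\preceq h'$, i.e.\ $g\leadsto^* h'$, so $g$ is the desired prefix of $h'$ accessible from $h$. This completes the induction and establishes \PRhc.

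The one point requiring care---and the main conceptual content---is that case~(iii) strictly decreases $\abs{h'}$ while leaving $h$ and $e$ untouched, so that the repeated appeals to condition~(iii) are guaranteed to terminate; this is exactly the inductive bridging depicted in \cref{fig:bridging-hc-hcl}. Everything else is routine, relying only on $\leadsto\,\subseteq\,\leadsto^*$ and the transitivity of $\leadsto^*$.
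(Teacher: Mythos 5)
Your proof is correct and follows exactly the route the paper intends: the paper's proof is precisely ``a simple induction on the length of $h'$ in the definition of \PRhcl,'' and your write-up fills in the details of that induction (base case forcing condition~(i), and condition~(iii) decreasing $\abs{h'}$ so the induction hypothesis applies, with transitivity of $\leadsto^*$ closing the argument).
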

\begin{proof}
  A simple induction on the length of $h'$ in the definition of \PRhcl proves the claim.
\end{proof}

In \cref{sec:sub-s5-settings} we will see that \PRhcl is in general strictly stronger than \PRhc;
and in particular in \cref{fig:pr-ee-pr-hc} that it detects situations where (intuitively) information is lost,
for which \PRhc fails to do so.
However, in the context of S5, the notions are in fact equivalent.

\begin{proposition}
  \label{result:pr-ee-equiv-pr-hc-equiv-pr-hcl}
  An S5 ETL frame has \PR \tiff it has \PRhcl.
\end{proposition}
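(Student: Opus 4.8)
The plan is to split the biconditional into its two directions and lean on what is already proved. The direction \PRhcl $\Rightarrow$ \PR is essentially free: \cref{result:pr-hcl-implies-pr-hc} shows that \PRhcl implies \PRhc for arbitrary frames, and \cref{result:pr-ee-equiv-pr-hc} identifies \PRhc with \PRee (hence with \PR) on S5 frames. So all the work lies in the converse, namely that on S5 frames \PR (equivalently, \PRee) implies \PRhcl.

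For that direction I would argue directly from the epistemic-experience formulation of \cref{dfn:pr-ee}, exploiting that on an S5 frame $\acc$ is an equivalence relation, so that $h\acc h'$ is the same as $[h]_\acc=[h']_\acc$. Fix $h,h',e$ with $he\acc h'$; I must produce one of the three alternatives (i)--(iii) of \cref{dfn:pr-hcl}. By \PRee we have $\EE(he)\approx\EE(h')$, i.e.\ the two epistemic experiences coincide after deleting consecutive repetitions. The key structural observations are that $\EE(he)$ is just $\EE(h)$ with the single extra entry $[he]_\acc$ appended, and that $[he]_\acc=[h']_\acc$ since $he\acc h'$. I would therefore read off (i)--(iii) by inspecting only the last one or two entries of the stutter-reduced sequences.

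Concretely, I first dispose of the case $[he]_\acc=[h]_\acc$: here $[h]_\acc=[h']_\acc$, so $h\acc h'$ and alternative (i) holds (this branch also subsumes the degenerate case $h'=\epsilon$, since $he\acc\epsilon$ forces $\EE(he)$ to reduce to a single entry, whence $[h]_\acc=[he]_\acc$). Otherwise $[he]_\acc\neq[h]_\acc$, so the reduced form of $\EE(he)$ genuinely ends in $\dots\,[h]_\acc\,[he]_\acc$ with its last two entries distinct; as the reduced forms agree, $\EE(h')$ must also reduce to a sequence ending in $\dots\,[h]_\acc\,[he]_\acc$, and in particular $h'$ is nonempty. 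Writing $g$ for the immediate predecessor of $h'$ (so $g\leadsto h'$), I then compare $[g]_\acc$ with $[h']_\acc=[he]_\acc$: if they are equal, then $he\acc g\leadsto h'$, giving (iii); if they differ, then $[g]_\acc$ is exactly the penultimate entry of the reduced $\EE(h')$, which we already identified as $[h]_\acc$, so $h\acc g\leadsto h'$, giving (ii).

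The only delicate points, which I would settle up front, are the boundary behaviour (the empty-$h'$ case above, and checking that once the last two reduced entries are distinct the penultimate entry is genuinely well-defined) and the bookkeeping of stutter reduction itself. I expect this bookkeeping to be the main conceptual obstacle: one has to justify that the last two entries of a reduced sequence are both determined by and constrain the last two raw entries of the original, which is precisely where the hypotheses $[he]_\acc\neq[h]_\acc$ and, symmetrically, $[g]_\acc\neq[h']_\acc$ do the real work. Everything else is a short case analysis using only that $\acc$ is an equivalence relation.
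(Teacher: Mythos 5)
Your proof is correct. The easy direction coincides with the paper's (both go through \cref{result:pr-hcl-implies-pr-hc} and then \cref{result:pr-ee-equiv-pr-hc}), but for the substantive direction you take a genuinely different route: the paper dispatches it in one line with ``a simple induction on the length of $h'$ shows that \PRhc and \PRee imply \PRhcl'', whereas you give a direct, induction-free argument from \PRee alone. Your key move---on an S5 frame $he\acc h'$ forces $[he]_\acc=[h']_\acc$, so the stutter-reduced forms of $\EE(he)$ and $\EE(h')$ coincide and the three alternatives of \cref{dfn:pr-hcl} can be read off from their last two entries---is sound: alternative (\ref{dfn:pr-hcl:1}) covers the case $[he]_\acc=[h]_\acc$ (which, as you note, subsumes $h'=\epsilon$), and otherwise the immediate predecessor $g\leadsto h'$ has $[g]_\acc$ equal either to $[h']_\acc$, giving (\ref{dfn:pr-hcl:3}), or to the penultimate reduced entry, which is $[h]_\acc$, giving (\ref{dfn:pr-hcl:2}); reflexivity of $\acc$ turns these class identities back into accessibilities. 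In effect you have sharpened the paper's own two-case proof of the \PRee-implies-\PRhc half of \cref{result:pr-ee-equiv-pr-hc}: your extra case split on $[g]_\acc$ is exactly what upgrades the conclusion from \PRhc to \PRhcl, and it supplies the stutter-reduction bookkeeping (last reduced entry equals last raw entry; if the last two raw entries differ they are the last two reduced entries) that the paper leaves implicit. What your route buys is economy of hypotheses---only \PRee and the equivalence-relation properties are used, \PRhc enters nowhere---and a fully explicit argument; what the paper's buys is brevity, at the cost of leaving the induction unspelled.
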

\begin{proof}
  Recall that we use \PR to refer to \PRhc and/or \PRee,
  since they are equivalent according to \cref{result:pr-ee-equiv-pr-hc}.
  \PRhcl implies \PRhc by \cref{result:pr-hcl-implies-pr-hc}.
  Vice versa, a simple induction on the length of $h'$ shows that \PRhc and \PRee imply \PRhcl.
\end{proof}

Thus, in the context of S5, \PRhcl is equivalent to the ``established'' notions and
we can use it to characterize perfect recall frames.

\begin{theorem}
\label{result:pr-hcl-definable}
  An ETL frame has \PRhcl \tiff it validates the following formula for each event $e$
  (recall that $\nextstep\phi$ abbreviates $\bigvee_{e'}\after{e'} \phi$):
  \begin{equation}
    \label{eq:axiom-pr-hcl}\tag{$\star$}
    \after e L p\implies L p \vee L\nextstep p \vee \after e L \nextstep p\mpunct.
  \end{equation}
  Together with the normal modal logic axioms and deduction rules,
  it is sound and complete with respect to the class of ETL frames with \PRhcl.
\end{theorem}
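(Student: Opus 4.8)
The plan is to split the statement into a frame-definability biconditional and a soundness-and-completeness claim, deriving the latter from the fact that~\eqref{eq:axiom-pr-hcl} is a Sahlqvist formula. Since $E$ is finite, I fix a single event~$e$ and prove that the instance of~$(\star)$ for~$e$ corresponds to the inclusion $[he]_\acc\subseteq[h]_\acc\cup[[h]_\acc]_\leadsto\cup[[he]_\acc]_\leadsto$ holding for all~$h$; conjoining over~$E$ then yields the full condition of \cref{dfn:pr-hcl}. The two directions of the correspondence are both routine, and the real work lies in reconciling Sahlqvist completeness with the tree structure of genuine ETL frames.

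For the direction from \PRhcl to validity, I would fix an arbitrary valuation~$V$ and history~$h$ with $\F,V,h\models\after e L p$. Unfolding the semantics gives $he\in H$ (since $\acc$ relates only histories in~$H$) together with some~$h'$ with $he\acc h'$ and $h'\in V(p)$. Applying \PRhcl in the case form of \cref{dfn:pr-hcl} to $h,h',e$: in case~(i), $h\acc h'$ witnesses $\F,V,h\models L p$; in case~(ii), an~$h''$ with $h\acc h''\leadsto h'$ has the $p$-world~$h'$ as a one-step extension, so $\F,V,h''\models\nextstep p$ and hence $\F,V,h\models L\nextstep p$; in case~(iii), an~$h''$ with $he\acc h''\leadsto h'$ gives $\F,V,he\models L\nextstep p$, hence $\F,V,h\models\after e L\nextstep p$. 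So one disjunct of the consequent always holds.

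For the converse I argue by contraposition with a singleton (minimal) valuation. If the inclusion fails there are $h,h',e$ with $he\acc h'$ for which none of (i)--(iii) holds. Setting $V(p):=\{h'\}$ and evaluating at~$h$, the antecedent holds because $he\acc h'$ and $h'\in V(p)$. For the consequent, $\F,V,h\models L p$ would require $h\acc h'$, i.e.\ (i); $\F,V,h\models L\nextstep p$ would require some~$h''$ with $h\acc h''\leadsto h'$ (as $h'$ is the unique $p$-world), i.e.\ (ii); and $\F,V,h\models\after e L\nextstep p$ would require some~$h''$ with $he\acc h''\leadsto h'$, i.e.\ (iii). Since all three fail, $(\star)$ is refuted at~$h$.

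Finally, for the axiomatization I would note that~\eqref{eq:axiom-pr-hcl} is a Sahlqvist implication: its antecedent $\after e L p$ is a diamond prefix applied to the atom~$p$, and its consequent $L p\vee L\nextstep p\vee\after e L\nextstep p$ is positive in~$p$. By \citet{sahlqvist_completeness_1975} the least normal modal logic extended with~\eqref{eq:axiom-pr-hcl} is then canonical and strongly complete with respect to the elementary frame class defined by its first-order correspondent, which by the biconditional above is exactly \PRhcl; soundness is immediate from the same biconditional. The step needing genuine care---and the main obstacle---is that Sahlqvist completeness is stated for arbitrary Kripke frames of the given signature, whereas ETL frames are tree/forest structures in which each~$\after e$ is interpreted by a \emph{partial function}. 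I would handle this in the standard way, adding to the base logic the determinism axioms $\after e p\implies\afterall e p$ (themselves Sahlqvist, defining functionality of the event relations) and unravelling the canonical model into an ETL forest along a bounded morphism, which transfers the refuting point. The point that must actually be verified is that this unravelling preserves \PRhcl; this should follow from its purely single-step, local formulation, but it does not come for free from the bounded morphism and so requires explicit checking.
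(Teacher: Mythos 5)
Your definability argument is correct and is, in substance, the paper's own: the paper runs the Sahlqvist--van Benthem computation on the second-order translation of the dual formula, and the minimal instantiation of $P$ it reads off is exactly your singleton valuation $V(p)=\{h'\}$; your two directions are the semantic rendering of that calculation, so for this half the approaches coincide (yours being self-contained where the paper cites \citet{sahlqvist_completeness_1975}).

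The completeness half is where you and the paper diverge, and your flagged obstacle is genuine --- indeed it is a gap in the paper's proof, not just in yours. The paper's entire argument is that the dual of \eqref{eq:axiom-pr-hcl} is Sahlqvist, hence complete with respect to ``the class of frames it defines''; it never engages with the mismatch between arbitrary Kripke frames (where Sahlqvist completeness lives) and ETL frames. The mismatch has teeth: $\after e p\implies\afterall e p$ is valid on every ETL frame, yet it fails on a Kripke frame with two $\leadsto_e$-successors of one point and empty $\acc$, which satisfies the first-order correspondent vacuously; so the minimal normal logic plus \eqref{eq:axiom-pr-hcl} cannot derive it, and completeness with respect to ETL frames with \PRhcl, read literally, does not hold without something like your determinism axioms. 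That said, your proposed repair stops precisely where the real work begins. Unravelling gives acyclicity and, with determinism, functionality, but an ETL frame in this paper is a single \emph{finite} tree in which every history is temporally reachable from the root $\epsilon$, whereas in the canonical model the epistemic successors of the refuting point need not be temporally reachable from any common root (your own sketch accordingly retreats to forests, which is not what the theorem asserts), temporal predecessor chains may be infinite, and finiteness of the protocol requires a separate finite-model-property argument that neither Sahlqvist canonicity nor bounded morphisms supply. So the step you defer --- that the surgery lands in an ETL frame with \PRhcl while preserving the refutation --- is not a routine verification but the missing core of the completeness claim, for the paper as much as for your proposal.
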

\begin{proof}
  This follows from the Correspondence and Completeness theorems by \citet{sahlqvist_completeness_1975}.
  In some more detail, note first that the formula as frame property is equivalent to
  \[
  K p \wedge K\nextstepall p \wedge \afterall e K \nextstepall p \implies \afterall e K p\mpunct,
  \]
  where $\nextstepall\phi$ abbreviates $\bigwedge_{e'}\afterall{e'}\phi$.
  This is a Sahlqvist formula, so as an axiom it is complete
  with respect to the class of frames it defines.
  To see what that class is, we start from the second-order formulation of this frame property
  (for any histories $h,h'$ and event $e'$ we write $h\leadsto_{e'}h'$ \tiff $he'=h'$,
  and we write $h\leadsto h'$ \tiff $h\leadsto_{e'}h'$ for some $e'$):
  \begin{align*}
    \forall P\forall h_1&\big[
    \begin{aligned}[t]
      &\forall h_2 (h_1\acc h_2\implies P h_2)\\
      &\wedge\forall h_2\forall h_3(h_1\acc h_2\leadsto h_3 \implies P h_3)\\
      &\wedge\forall h_2\forall h_3\forall h_4(h_1\leadsto_e h_2\acc h_3\leadsto h_4\implies P h_4)\big]
    \end{aligned}\\
    &\implies\forall h_2\forall h_3(h_1\leadsto_e h_2\acc h_3\implies P h_3)
  \end{align*}
  As \citeauthor{sahlqvist_completeness_1975} pointed out,
  since $P$ does not occur negated in the consequent,
  the minimal instantiation of $P$ satisfying the antecedent
  yields an equivalent first-order formula.
  This minimal instantiation can be read off as:
  \[
  P h := h_1\acc h\vee\exists h_2(h_1\acc h_2\leadsto h)\vee
  \exists h_2\exists h_3(h_1\leadsto_e h_2\acc h_3\leadsto h)\mpunct.
  \]
  Since it satisfies the antecedent,
  we are left with the instantiated consequent:
  \begin{align*}
    \forall h_1\forall h_2\forall h_3\big[
    &h_1\leadsto_e h_2\acc h_3\implies\\
    &h_1\acc h_3\\
    &\vee\exists h_4(h_1\acc h_4\leadsto h_3)\\
    &\vee\exists h_4\exists h_5(h_1\leadsto_e h_4\acc h_5\leadsto h_3)\big]\mpunct.
  \end{align*}
  Since there is at most one $e$-successor for any given history $h$ (namely $he$),
  we can replace $h_4$ in the last disjunct by $h_2$.
  It is then easy to see that this is equivalent to \PRhcl.
\end{proof}

\begin{corollary}
  \label{result:pr-definable}
  An S5 ETL frame has \PR \tiff it validates \eqref{eq:axiom-pr-hcl}.
\end{corollary}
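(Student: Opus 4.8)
The plan is to obtain this statement immediately by chaining the two results just established, since it is a direct corollary with no new content. The key observation is that \cref{result:pr-hcl-definable} establishes the defining formula \eqref{eq:axiom-pr-hcl} for \PRhcl over \emph{all} ETL frames, so it applies in particular once we restrict attention to S5 frames, while \cref{result:pr-ee-equiv-pr-hc-equiv-pr-hcl} bridges \PRhcl back to \PR precisely in the S5 setting.

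Concretely, I would fix an arbitrary S5 ETL frame \F and compose two biconditionals. First, by \cref{result:pr-ee-equiv-pr-hc-equiv-pr-hcl}, \F has \PR if and only if it has \PRhcl. Second, by \cref{result:pr-hcl-definable}, \F has \PRhcl if and only if it validates \eqref{eq:axiom-pr-hcl}. Chaining these yields that \F has \PR if and only if it validates \eqref{eq:axiom-pr-hcl}, which is exactly the claim.

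There is essentially no obstacle here: the genuine work has already been carried out, namely the semantic equivalence of the perfect-recall notions under S5 (in the proposition) and the Sahlqvist correspondence producing the first-order frame condition \PRhcl (in the theorem). The only point worth noting is that \cref{result:pr-hcl-definable} is stated for arbitrary ETL frames rather than only S5 ones, so its specialization to S5 frames is unproblematic; the S5 hypothesis in the corollary is needed solely to license the appeal to \cref{result:pr-ee-equiv-pr-hc-equiv-pr-hcl}.
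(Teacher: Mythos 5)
Your proof is correct and is exactly the paper's own argument: the paper's proof reads ``Immediate with \cref{result:pr-ee-equiv-pr-hc-equiv-pr-hcl,result:pr-hcl-definable}'', i.e., the same chaining of the S5 equivalence \PR\ $\Leftrightarrow$ \PRhcl\ with the frame-definability of \PRhcl. Your added remark that \cref{result:pr-hcl-definable} holds for arbitrary ETL frames and so specializes to S5 without issue is a fair (and accurate) elaboration of what the paper leaves implicit.
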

\begin{proof}
  Immediate with \cref{result:pr-ee-equiv-pr-hc-equiv-pr-hcl,result:pr-hcl-definable}.
\end{proof}

We thus have an axiomatization of S5 ETL with perfect recall.
However, some of the results we used indeed depended on S5.
If we give up S5, we have to take a fresh look at certain issues,
and that is the topic of the next section.

\section{Sub-S5 settings}
\label{sec:sub-s5-settings}

To our knowledge, perfect recall has only been considered in the context of S5 in the literature,
a likely reason being that both communities that have studied the notion most
(interpreted systems and game theory)
virtually exclusively consider S5 settings.
However, it may make perfect sense, for example,
to say of a misinformed agent that he correctly remembers all information he has ever had,
even if that information itself is not correct.
In this section, we explore such settings of general ETL frames.

\subsection{Preliminaries}
\label{sec:preliminaries}

We stick with the symmetric-looking symbol~$\acc$ even when the relation is not necessarily symmetric.
Note that $[h]_\acc$ now does not necessarily contain $h$ itself anymore,
but we have the following fact.
\begin{fact}
  \label{result:trans-eucl-eq-class}
  If~$\acc$ is a transitive and Euclidean relation,
  then it is an equivalence relation on $[h]_\acc$ for any $h$
  (cf.~\cite[Theorem~3.3]{halpern_relationship_1991}).
  Consequently, for any $h,h'$ with $[h]_\acc\cap[h']_\acc\neq\emptyset$,
  we have $[h]_\acc=[h']_\acc$.
\end{fact}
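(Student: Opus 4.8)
The plan is to unfold the first-order definitions of transitivity ($x\acc y$ and $y\acc z$ imply $x\acc z$) and Euclideanness ($x\acc y$ and $x\acc z$ imply $y\acc z$) and to verify the two claims by direct chasing of these implications. Both claims are instances of the well-known correspondence folklore that a transitive Euclidean relation decomposes into ``clusters'' on each of which it behaves like an equivalence relation, and nothing beyond the bare definitions is needed. I would prove the two assertions separately, since the ``consequently'' is cleanest to obtain directly rather than by quoting the first assertion.

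For the first claim, fix a history $h$ and write $S=[h]_\acc$. I would show that $\acc$ restricted to $S$ is reflexive, symmetric, and transitive. The key observation is that for \emph{any} $y,z\in S$ we have $h\acc y$ and $h\acc z$, so Euclideanness immediately yields $y\acc z$; in other words $\acc$ is the \emph{total} relation on $S$. From this all three properties are trivial: reflexivity on $S$ follows by taking $z=y$ (so that $h\acc y$ and $h\acc y$ give $y\acc y$), symmetry follows by exchanging the roles of $y$ and $z$, and transitivity is inherited from the ambient relation anyway. I would emphasize here that although $h$ need not itself lie in $S$ (as noted just before the statement), this causes no trouble, because reflexivity on $S$ is supplied by Euclideanness rather than by any reflexivity of $\acc$ on all histories.

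For the second claim, suppose $[h]_\acc\cap[h']_\acc\neq\emptyset$ and pick $g$ in the intersection, so that $h\acc g$ and $h'\acc g$. To show $[h]_\acc\subseteq[h']_\acc$, take any $y$ with $h\acc y$: from $h\acc g$ and $h\acc y$ Euclideanness gives $g\acc y$, and then from $h'\acc g$ and $g\acc y$ transitivity gives $h'\acc y$, i.e.\ $y\in[h']_\acc$. The reverse inclusion follows by the symmetric argument with the roles of $h$ and $h'$ swapped, so $[h]_\acc=[h']_\acc$.

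The statement is essentially routine, so there is no real obstacle; the only point requiring care is bookkeeping the \emph{asymmetric} roles of the two axioms—using Euclideanness to ``turn a common source into a link'' ($h\acc g,\,h\acc y\Rightarrow g\acc y$) and then transitivity to compose ($h'\acc g,\,g\acc y\Rightarrow h'\acc y$)—and keeping in mind that $\acc$ is not assumed reflexive on all of $H$, so every use of reflexivity or symmetry must be justified from Euclideanness on the relevant cluster.
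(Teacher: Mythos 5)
Your proof is correct. Note that the paper never proves this Fact itself: it simply cites Halpern's Theorem~3.3 and treats the statement as known, so your argument supplies an elementary verification of something the paper leaves to the literature. Both halves of your argument are sound: Euclideanness applied with source $h$ shows that \emph{every} pair $y,z\in[h]_\acc$ satisfies $y\acc z$, i.e.\ $\acc$ is total on $[h]_\acc$ and hence trivially an equivalence relation there; and the two-step chase for the second claim (Euclideanness turns the common successor $g$ into a link $g\acc y$, then transitivity composes $h'\acc g\acc y$ into $h'\acc y$) is exactly what is needed, with the reverse inclusion by symmetry of the roles of $h$ and $h'$. Your explicit remark that $h$ itself need not lie in $[h]_\acc$, so reflexivity on the cluster must come from Euclideanness rather than from any global reflexivity, is precisely the point the paper flags just before stating the Fact, and it is the only place where a careless argument could go wrong.
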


\medskip

Note that the motivation and justifications for the definitions of perfect recall we gave
did not assume S5 knowledge.
Each of the notions captured a particular way of not losing information,
and they still make sense without S5.
We therefore take over the basic definitions without any change,
but take a new look at how they relate.

\subsection{Contrasting the notions}
\label{sec:contrasting-notions}

First note that \PRhcl still implies \PRhc,
since \cref{result:pr-hcl-implies-pr-hc} did not assume S5.
However, without any assumptions about the frames,
none of the other mutual implications among \PRhc, \PRhcl and \PRee remain.
This is witnessed by \cref{fig:pr-ee-pr-hc},
illustrating that epistemic experience and history consistency
reflect two different ways of remembering past information.

\begin{figure}
  \centering
  \subfloat[$\PRee$, but not $\PRhc$ nor $\PRhcl$]{
    \label{fig:pr-ee-pr-hc:ee-not-hc}
    \begin{tikzpicture}[etltree]
      \node[unlabeled] (root) {}
        child foreach \c in {1,2} {
          node {} child { node {} edge from parent node {$e_3$} }
          edge from parent node {$e_\c$}
        };
      \draw[accrel] (root-2) edge[rl] (root-1)
            (root-1-1) edge[lr,<->] (root-2-1)
            (root) edge[loop below] ()
            (root-1) edge[loop left] ()
            (root-1-1) edge[loop left] ()
            (root-2-1) edge[loop right] ();
    \end{tikzpicture}
  }
  \hfill
  \subfloat[$\PRhc$ and $\PRhcl$, but not $\PRee$]{
    \label{fig:pr-ee-pr-hc:hc-not-ee-1}
    \begin{tikzpicture}[etltree]
      \node[unlabeled] (root) {}
        child foreach \c in {1,2,3} {
          node {}
          edge from parent node {$e_\c$}
        };
      \draw[accrel] (root-1) edge[lr] (root-2) 
            (root-2) edge[lr] (root-3)
            (root) edge[loop below] ()
            (root-3) edge[loop above] ();
    \end{tikzpicture}
  }
  \hfill
  \subfloat[$\PRhc$ and $\PRhcl$, but not $\PRee$]{
    \label{fig:pr-ee-pr-hc:hc-not-ee-2}
    \begin{tikzpicture}[etltree]
      \node[unlabeled] (root) {}
        child {
          node {}
          edge from parent node {$e_1$}
        }
        child {
          node {}
          child {
            node {}
            edge from parent node {$e_3$}
          }
          edge from parent node {$e_2$}
        };
      \draw[accrel] (root-1) edge[lr] (root-2-1) 
            (root) edge[lr,in=120] (root-2)
            (root) edge[loop below] ()
            (root-2) edge[loop right] ()
            (root-2-1) edge[loop right] ();
    \end{tikzpicture}
    ~
  }
  \hfill
  \subfloat[$\PRhc$, but not $\PRhcl$ nor $\PRee$]{
    \label{fig:pr-ee-pr-hc:hc-not-hcl}
    \begin{tikzpicture}[etltree]
      \node[unlabeled] (root) {}
        child {
          node {}
          edge from parent node {$e_1$}
        }
        child {
          node {}
          child {
            node {}
            edge from parent node {$e_3$}
          }
          edge from parent node {$e_2$}
        };
      \draw[accrel] (root-1) edge[lr] (root-2-1) 
            (root) edge[loop below] ()
            (root-2) edge[loop right] ()
            (root-2-1) edge[loop right] ();
    \end{tikzpicture}
  }
  \caption{\PRhc, \PRhcl, and \PRee compared on general ETL frames,
    gray arrows depicting the accessibilities.
    \PRhc is violated in \subref{fig:pr-ee-pr-hc:ee-not-hc}
    since $e_1e_3\acc e_2e_3$ but there is no prefix of $e_2e_3$ that is accessible from $e_1$.
    \PRee is violated in \subref{fig:pr-ee-pr-hc:hc-not-ee-1}
    since $\EE(e_1)\not\approx\EE(e_2)$,
    and in \subref{fig:pr-ee-pr-hc:hc-not-ee-2} and \subref{fig:pr-ee-pr-hc:hc-not-hcl}
    since $\EE(e_1)\not\approx\EE(e_2e_3)$.
    \PRhcl is violated in \subref{fig:pr-ee-pr-hc:hc-not-hcl}
    since $e_1\acc e_2e_3$ but $\epsilon\not\acc e_2e_3$ and $\epsilon\not\acc e_2$ and $e_1\not\acc e_2$.
}
  \label{fig:pr-ee-pr-hc}
\end{figure}

\begin{itemize}
  \item[\subref{fig:pr-ee-pr-hc:ee-not-hc}]
    An agent that only has perfect recall with respect to epistemic experience
    may at some point be certain that a particular history can be excluded,
    but later on ``forget'' this piece of information.
    In particular, at $e_1e_3$ the agent considers $e_2e_3$ possible,
    even though he never considered $e_2$ possible.
  \item[\subref{fig:pr-ee-pr-hc:hc-not-ee-1} and \subref{fig:pr-ee-pr-hc:hc-not-ee-2}]
    An agent that only has perfect recall with respect to history consistency
    may at some state consider another state possible,
    although in that other state his epistemic experience would have been different.
    For example, an agent at $e_1$ in \subref{fig:pr-ee-pr-hc:hc-not-ee-2}
    is (mistakenly) certain that he is at $e_2e_3$,
    even though in that state his previous information set would have been $\{e_2\}$,
    which contradicts his actual epistemic experience.
  \item[\subref{fig:pr-ee-pr-hc:hc-not-hcl}]
    The intuition is similar to the previous case,
    but here we see that \PRhcl is more fine-grained than \PRhc.
    An agent at $e_1$ thinks he is at $e_2e_3$,
    even though he never considered $e_2$ possible.
    He thus thinks himself at the endpoint of a history whose unfolding he deemed impossible.
    \PRhc grants this agent the label of perfect recall,
    while \PRhcl denies it.
\end{itemize}

Note that, while these phenomena reflect some kind ``\emph{forgetting}'',
they do not at first glance constitute a coherent, rational method of \emph{belief revision}.
A full-fledged doxastic logic is needed in order to really model agents
that reconsider their previous assessments and deal with ``unwanted'' memories properly.

\bigskip

The following straightforward result enables us to identify the settings
in which the different notions of perfect recall can be meaningfully compared.

\begin{proposition}
\label{result:pr-ee-implies-45}
  Any ETL frame that has \PRee is transitive and Euclidean.
\end{proposition}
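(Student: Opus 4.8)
The plan is to reduce both properties to a single, much stronger observation: \PRee forces any two $\acc$-related histories to carry \emph{identical} information sets. Concretely, I would first establish the claim that whenever $h\acc h'$ we have $[h]_\acc=[h']_\acc$. Once that is in hand, transitivity and Euclideanness both become one-liners, since each merely says that membership in $[h]_\acc$ is inherited along an $\acc$-edge in the appropriate direction.

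The crux is the claim itself, which I would argue from the structure of $\EE$. For any history $h$, the sequence $\EE(h)=[\epsilon]_\acc\,[e_1]_\acc\,\dots\,[h]_\acc$ ends, by definition, in $[h]_\acc$; and collapsing repetitions of subsequent identical sets never alters the \emph{last} entry of a sequence. Hence the reduced form of $\EE(h)$ still ends in $[h]_\acc$, and likewise the reduced form of $\EE(h')$ ends in $[h']_\acc$. Now if $h\acc h'$, then \PRee yields $\EE(h)\approx\EE(h')$, i.e., the two reduced sequences coincide; comparing their final entries gives $[h]_\acc=[h']_\acc$.

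With the claim in place, I would finish by direct computation. For transitivity, if $h\acc h'$ and $h'\acc h''$, then $h''\in[h']_\acc=[h]_\acc$, so $h\acc h''$. For Euclideanness, if $h\acc h'$ and $h\acc h''$, then $h''\in[h]_\acc=[h']_\acc$, so $h'\acc h''$.

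The main obstacle is simply isolating the one genuinely load-bearing fact: that stuttering-equivalence preserves the terminal information set. I would state this explicitly, perhaps as a small sub-observation that the last element of a finite sequence is invariant under collapsing consecutive duplicates, so that the passage from $\EE(h)\approx\EE(h')$ to $[h]_\acc=[h']_\acc$ is airtight. This is the only point at which the definition of $\approx$ is used in a nontrivial way; everything else is routine bookkeeping.
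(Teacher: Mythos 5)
Your proposal is correct and follows essentially the same route as the paper: the paper's (terser) proof likewise rests on the observation that $h\acc h'$ together with \PRee forces $[h]_\acc=[h']_\acc$, from which transitivity and Euclideanness follow by direct membership computations. You merely make explicit the step the paper calls obvious, namely that stuttering-equivalence of $\EE(h)$ and $\EE(h')$ preserves their final entries.
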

\begin{proof}
  This is obvious from \cref{dfn:pr-ee}.
  For example, for any three histories $h,h',h''$, if $h\acc h'$ and $h\acc h''$,
  then \PRee implies that $[h]_\acc=[h']_\acc=[h'']_\acc$.
  Since $h',h''\in [h]_\acc$, we also get $h'\in [h'']_\acc$ and $h''\in [h']_\acc$.
\end{proof}

This result is not very surprising, given that \PRee
requires of an agent to be able to assess his own epistemic experience---%
including at the current state.
It implies that any reflexive ETL frame with \PRee is already an S5 ETL frame.
On the level of agents, an agent who has correct beliefs and \PRee
has in fact already full (S5) knowledge.
However, perfect recall does not \emph{require} the agent to have correct beliefs
(in fact, perfect recall by itself is compatible with believing falsum).
For example, KD45 is a common sub-S5 setting in which perfect recall is a meaningful notion.

Given that transitivity and Euclideanness are inherent to \PRee,
we continue our comparison within the corresponding class of frames.
In the following, we use \dfn{introspective} to mean ``transitive and Euclidean''.

\subsection{Characterizing \PRee locally}
\label{sec:intr-sett}

Given the fact that \PRhcl no longer characterizes \PRee
and the original definition of \PRee is somewhat unwieldy,
it can be useful to have a local condition on histories and accessibilities that corresponds to it.
It turns out that we can re-use \PRhcl by slightly modifying the frame in question.

For a frame \F with accessibility relation~$\acc$, we will use $\FSfive$ and $\accSfive$ to denote the S5~closure.
We need the following small technical condition:
We say that a frame satisfies \dfnless{persistent insanity}
if, whenever $[h]_\acc=\emptyset$ and $h\preceq h'$, then $[h']_\acc=\emptyset$.
Intuitively, once a corresponding agent has inconsistent beliefs,
he will remain in that pitiful condition forever.

\begin{proposition}
  \label{result:introsp-pr-ee-iff-closure-pr-hcl}
  An introspective ETL frame satisfying persistent insanity has \PRee \tiff its S5 closure has \PRhcl.%
  \footnote{\label{fn:s5-closure-is-local}
    Note that this is indeed a local condition:
    On introspective frames the S5 closure is the symmetric and reflexive closure,
    without any need of iterating through the accessibility relation (cf.~\cref{result:trans-eucl-eq-class}).}
\end{proposition}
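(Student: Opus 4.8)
The plan is to reduce the claim to an equivalence between \PRee on \F and \PRee on its S5 closure \FSfive, and then invoke the already-established S5 case. Since \accSfive is an equivalence relation, \FSfive is an S5 frame, so by \cref{result:pr-ee-equiv-pr-hc-equiv-pr-hcl} (together with \cref{result:pr-ee-equiv-pr-hc}) we have that \FSfive has \PRhcl \tiff it has \PRee. Thus it suffices to show that \F has \PRee \tiff \FSfive has \PRee. The entire difficulty then lies in relating the epistemic experiences computed with \acc to those computed with \accSfive, which I will write $\EE_\acc(\cdot)$ and $\EE_{\accSfive}(\cdot)$ respectively.

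First I would analyze the structure of the \accSfive-classes. Using \cref{result:trans-eucl-eq-class} (and the attendant observation that on introspective frames the S5 closure is merely the reflexive--symmetric closure), I would show that every history is either \emph{dead}, with $[h]_\acc=\emptyset$, in which case Euclideanness makes it \acc-isolated and hence its own singleton \accSfive-class, or else \emph{live}. Each non-singleton \accSfive-class $C$ then consists entirely of live histories and possesses a unique \emph{core} $A$, namely its \acc-reflexive points, which form a single \acc-equivalence class; moreover $[g]_\acc=A$ and $[g]_{\accSfive}=C$ for every $g\in C$. The assignment $A\mapsto C$ is a bijection between cores and non-singleton classes, so for live histories and their prefixes two information sets coincide under \acc exactly when the corresponding ones coincide under \accSfive.

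Next I would bring in persistent insanity, whose role is to guarantee that every prefix of a live history is again live, since otherwise a dead prefix would force the history itself to be dead. Consequently, for a live $g$ the sequence $\EE_\acc(g)$ consists solely of cores (never $\emptyset$), while $\EE_{\accSfive}(g)$ consists of the corresponding classes; by the bijection above these two sequences have identical stuttering patterns, whence $\EE_\acc(g)\approx\EE_\acc(g')$ \tiff $\EE_{\accSfive}(g)\approx\EE_{\accSfive}(g')$. I expect this to be the main obstacle: without persistent insanity, consecutive dead prefixes would stutter as repeated copies of $\emptyset$ under \acc but would appear as \emph{distinct} singletons under \accSfive, destroying the correspondence between the two notions of epistemic experience and with it the whole reduction.

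Finally I would assemble the equivalence. For the backward direction, every \acc-edge is an \accSfive-edge, so \FSfive-\PRee transfers directly through the experience correspondence. For the forward direction, \F-\PRee forces all histories in a non-singleton class to share one experience: a core element $a_0$ has $[a_0]_\acc=A$ and so sees all of $A$, every element of $C$ sees some core element, and transitivity of $\approx$ then propagates equivalence of $\EE_{\accSfive}$ across all of $C$—which is exactly \FSfive-\PRee, the singleton classes being trivial. Combined with the reduction in the first paragraph, this yields the proposition.
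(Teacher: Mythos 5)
Your proposal is correct and takes essentially the same route as the paper's own proof: both reduce the statement, via \cref{result:pr-ee-equiv-pr-hc-equiv-pr-hcl} applied to the S5 closure, to showing that \PRee is invariant under taking that closure, and both establish this invariance through the correspondence between $\acc$-images and $\accSfive$-classes given by \cref{result:trans-eucl-eq-class}, with persistent insanity handling the histories with empty information sets. Your write-up simply spells out in detail (dead versus live histories, cores, the bijection between them and closure classes) what the paper compresses into two sentences.
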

\begin{proof}
  Due to \cref{result:pr-ee-equiv-pr-hc-equiv-pr-hcl},
  \PRee and \PRhcl are equivalent on the S5 closure.
  We can thus prove the claim by showing that
  \PRee is invariant under taking this closure.

  To see that this is indeed the case, take any pair $h,h'$ of histories,
  the accessibility relation~$\acc$ of an introspective frame satisfying persistent insanity,
  and its S5 closure~$\accSfive$.
  With \cref{result:trans-eucl-eq-class}, it is easy to see that,
  as long as $[h]_\acc\neq\emptyset$, we have
  $[h]_\acc=[h']_\acc$ \tiff $[h]_{\accSfive}=[h']_{\accSfive}$.
  Inductively it follows that the equivalence of epistemic experiences
  is invariant under taking the S5 closure as long as $[h]_\acc\neq\emptyset$;
  persistent insanity ensures that \PRee is also satisfied for any $h'$
  extending an $h$ with $[h]_\acc=\emptyset$.
\end{proof}

To see that persistent insanity is indeed needed for this result,
consider \cref{fig:pr-ee-pr-hc}\subref{fig:pr-ee-pr-hc:ee-not-hc}
with the accessibilities $e_2\acc e_1$ and $e_1\acc e_1$ removed.
The resulting frame still has \PRee, but its S5 closure does not have \PRhcl.

\begin{corollary}
  A KD45 ETL frame has \PRee \tiff its S5 closure has \PRhcl.
\end{corollary}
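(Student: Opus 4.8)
The plan is to derive this directly from \cref{result:introsp-pr-ee-iff-closure-pr-hcl} by observing that the KD45 axioms supply exactly the hypotheses of that proposition. Recall that KD45 comprises transitivity (axiom~4) and Euclideanness (axiom~5)---which together are precisely what we have called \emph{introspective}---plus seriality (axiom~D). So a KD45 frame is in particular introspective, and the only thing left to check is that it also satisfies persistent insanity.

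First I would unpack seriality: it says that every history has at least one accessible history, i.e., $[h]_\acc\neq\emptyset$ for every~$h$. Persistent insanity is the conditional ``whenever $[h]_\acc=\emptyset$ and $h\preceq h'$, then $[h']_\acc=\emptyset$.'' Since its antecedent requires the existence of some history with empty image, and seriality rules exactly this out, the condition holds vacuously on any serial frame. Having thus established that a KD45 frame is introspective and (trivially) satisfies persistent insanity, I would simply invoke \cref{result:introsp-pr-ee-iff-closure-pr-hcl} to conclude that such a frame has \PRee \tiff its S5 closure has \PRhcl.

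There is essentially no obstacle here: all the substantive work---the invariance of \PRee under the S5 closure on introspective frames, and the handling of the ``insane'' histories with empty image---was already carried out in the preceding proposition. This corollary merely records the pleasant fact that the troublesome case motivating persistent insanity, namely an agent lapsing into inconsistent beliefs, simply cannot arise once seriality (the D axiom) is imposed, so the extra side condition of \cref{result:introsp-pr-ee-iff-closure-pr-hcl} becomes automatic.
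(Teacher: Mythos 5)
Your proof is correct and matches the paper's own argument exactly: KD45 frames are introspective (transitive and Euclidean), and seriality makes persistent insanity hold vacuously, so \cref{result:introsp-pr-ee-iff-closure-pr-hcl} applies immediately. The paper's proof is just a one-line version of this same reasoning.
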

\begin{proof}
  Immediate, since KD45 frames are introspective and vacuously satisfy persistent insanity.
\end{proof}

\begin{numberedremark}
  \label{result:pr-hcl-implies-pers-ins}
  Note that \PRhcl implies persistent insanity,
  so \cref{result:introsp-pr-ee-iff-closure-pr-hcl} applies to all introspective frames with \PRhcl.
\end{numberedremark}

As witnessed by \cref{fig:pr-ee-pr-hc} and by \cref{fig:4-5-pr-hc-not-pr-ee} later on,
the notions \PRee and \PRhcl are incomparable in the sense that
each one is stronger than the other one under certain circumstances.
\Cref{result:introsp-pr-ee-iff-closure-pr-hcl} gives an insight as to why this is so:
By applying the \PRhcl condition to the S5 closure of a frame,
on the one hand the antecedent in this condition becomes more permissive,
but on the other hand so does the consequent.
Thus, the condition gets both strengthened and weakened.

\bigskip

Now that we have contrasted our basic notions of perfect recall
and provided and discussed separate local characterizations,
we proceed to characterize the combination of the notions.
We use \PR to denote the combination of perfect recall notions,
\PRee plus \PRhcl (and thus \PRhc),
describing perfect-recall agents that can reason both
about their epistemic experience and about history consistency.

As mentioned earlier,
our considerations so far hold both for ETL trees and ETL forests.
Now, however, the distinction becomes important.
We start by focusing on trees.

\subsection{Characterizing \PR on trees}

It turns out that on introspective trees, \PRhcl captures both definitions of perfect recall,
much like it (and \PRhc) did on S5 frames.
This allows us to define and axiomatize \PR on introspective trees,
reusing the results we obtained in \cref{sec:char-pras-etl}.

\begin{theorem}
  \label{result:pr-hcl-equiv-pr}
  An introspective ETL tree has \PR \tiff it has \PRhcl.
\end{theorem}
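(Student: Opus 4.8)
The plan is to prove the nontrivial direction ($\PRhcl\Rightarrow\PR$); the converse is immediate, since \PR is by definition \PRee together with \PRhcl. So assume an introspective tree has \PRhcl. One half of \PR, namely \PRhc, is already given by \cref{result:pr-hcl-implies-pr-hc}, so it remains to establish \PRee. For this I would reduce to the S5~closure: by \cref{result:pr-hcl-implies-pers-ins} the frame satisfies persistent insanity, whence \cref{result:introsp-pr-ee-iff-closure-pr-hcl} tells us that \PRee holds \tiff the S5~closure $\accSfive$ has \PRhcl. The whole theorem thus comes down to showing that \PRhcl is preserved under passing to the S5~closure on an introspective tree.

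To get a grip on the closure I would first record the structural consequences of introspectivity (cf.~\cref{result:trans-eucl-eq-class}): whenever $h\acc h'$ one has $[h]_\acc=[h']_\acc$ and $h'\in[h']_\acc$, so every accessed history is reflexive and the nonempty $\acc$-images are genuine clusters. Consequently $\accSfive$ admits a purely local description: $h\accSfive h'$ holds exactly when $h=h'$ or $[h]_\acc=[h']_\acc\neq\emptyset$, with the insane histories (those with empty image) sitting in singleton classes. Since $\accSfive$ is an S5 relation, \cref{result:pr-ee-equiv-pr-hc-equiv-pr-hcl} lets me verify the weaker-looking \PRhc for $\accSfive$ in place of \PRhcl: it suffices to show that for every $h,e,h'$ with $he\accSfive h'$ some prefix of $h'$ is $\accSfive$-accessible from $h$.

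I would then dispatch this by induction on the length of $h'$, splitting on how $he$ and $h'$ are related. If $h'=he$ (in particular the insane case, where the class of $he$ is a singleton), the prefix $h$ itself works, using reflexivity of $\accSfive$ together with $h\leadsto he$. If $he\acc h'$ already in the original frame, then the original \PRhc supplies a prefix of $h'$ that is $\acc$-, hence $\accSfive$-, accessible from $h$. The remaining and genuinely new case is when $h'$ merely shares the information set $[h']_\acc=[he]_\acc$ with $he$ but is \emph{not} itself $\acc$-accessible from $he$---a ``basin'' history glued to $he$ only by the closure. Here I would apply \PRhcl to the last step of $h'=gf$: each of its three disjuncts either exhibits a proper prefix $g\prec h'$ lying in the same $\accSfive$-class as $he$ (so the induction hypothesis applies to the shorter history $g$) or places such a prefix directly into the class of $h$ (so that $h\accSfive g\preceq h'$ finishes the case).

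The main obstacle is exactly this basin case, and in particular controlling the induction so that the descent stays inside the relevant $\accSfive$-classes and actually terminates. Well-foundedness of the tree guarantees termination, and persistent insanity guarantees that the descent never stalls at a history whose image has collapsed to $\emptyset$. This is also where the single-root assumption enters: it ensures that every branch of the induction bottoms out at a genuine prefix rather than escaping toward a second root, which is precisely why the forest case (treated separately afterwards) behaves differently.
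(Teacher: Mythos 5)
Your high-level route coincides with the paper's: the direction \PR${}\Rightarrow{}$\PRhcl is definitional, and for the converse you reduce, via \cref{result:pr-hcl-implies-pers-ins} and \cref{result:introsp-pr-ee-iff-closure-pr-hcl}, to showing that \PRhcl survives passage to the S5 closure---which is exactly the paper's \cref{result:pr-hcl-preserved-under-s5-closure}. Two of your refinements are correct and genuinely nice. First, the local description $h\accSfive h'$ \tiff $h=h'$ or $[h]_\acc=[h']_\acc\neq\emptyset$ is the right one on introspective frames; it is in fact more accurate than footnote~\ref{fn:s5-closure-is-local}, since two histories can be glued together solely through a common $\acc$-successor, a configuration that the paper's own trichotomy ($he=h'$, $he\acc h'$, $h'\acc he$) silently skips. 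Second, downgrading the proof obligation on the closure from \PRhcl to \PRhc is legitimate, because the closure is S5 and there \PRhc, \PRee and \PRhcl coincide (\cref{result:pr-ee-equiv-pr-hc,result:pr-ee-equiv-pr-hc-equiv-pr-hcl}).

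The gap is inside your ``basin'' case, at exactly the point where the difficulty of the theorem is concentrated: when $h'$ is the root $\epsilon$. Your argument there begins ``apply \PRhcl to the last step of $h'=gf$'', but $\epsilon$ has no last step, and the configuration is not vacuous. For instance, take $H=\{\epsilon,a,ab,c\}$ with $w\acc c$ for every $w\in H$: this tree is introspective and has \PRhcl, and $ab\accSfive\epsilon$ holds in the basin sense (neither $ab\acc\epsilon$ nor $\epsilon\acc ab$), so your induction must produce $a\accSfive\epsilon$; yet none of your three moves applies---there is no shorter history to recurse on and no last step of $h'$ to split on. This is precisely the case for which the paper builds its auxiliary machinery, \cref{result:acc-along-history} and \cref{result:flashlight}, which descend along the history of a \emph{witness} $x$ in the shared information set rather than along $h'$; and it is exactly where the single-root assumption does real work, since on forests $h'$ may be a second root and the claim is simply false (\cref{fig:4-5-pr-hc-not-pr-ee}). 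Your closing sentence acknowledges the root issue, but it only restates what must be proved; it is not an argument. A secondary, repairable imprecision: even for $h'=gf$, the disjuncts of \PRhcl applied to $gf\acc x$ (you must first name a witness $x\in[he]_\acc=[gf]_\acc$) speak about prefixes of $x$, not of $h'$; to obtain what you claim, one should choose $x$ of minimal length in the class (which rules out the third disjunct) and cross-apply \PRhcl to $he\acc x$ as well, combining the two case splits. That part can be completed along the lines you indicate; the root case cannot, and closing it requires the flashlight-style descent that your sketch omits.
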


Note that this result is not in contradiction with the examples in \cref{fig:pr-ee-pr-hc},
since \subref{fig:pr-ee-pr-hc:hc-not-ee-1} is not transitive and \subref{fig:pr-ee-pr-hc:hc-not-ee-2} is not Euclidean.
For the proof, we need the following auxiliary results.

\begin{observation}
  \label{result:acc-along-history}
  For any introspective ETL frame with \PRhcl and
  histories $h_1,h_2$ with $h_1\preceq h_2$ and $h_1\acc h_2$,
  for each $h_1'\preceq h_1$ there is $h_2'\preceq h_2$ such that $h_1'\acc h_2'$.
\end{observation}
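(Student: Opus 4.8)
The plan is to trace accessibilities backwards along $h_2$ using \PRhcl, peeling one event off $h_1$ at a time. The core is a one-step-back claim: whenever $a\acc b$ with $b\preceq h_2$ and $a$ is non-empty, writing $a=ge$ for its last event $e$, there is some $b'\preceq h_2$ with $g\acc b'$. Granting this claim, the Observation follows by iteration. Start from the given accessible pair $h_1\acc h_2$ (with the trivial $h_2\preceq h_2$), apply the claim to obtain an accessible prefix of $h_2$ for the immediate predecessor of $h_1$, and repeat, shortening the source by one event at each step until every prefix $h_1'\preceq h_1$ has been equipped with a witness $h_2'\preceq h_2$.

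To prove the one-step-back claim I would induct on the length $\abs{b}$ of the target. Apply \PRhcl to the triple $(g,b,e)$, using $a=ge\acc b$. In case~\ref{dfn:pr-hcl:1} we get $g\acc b$ directly and take $b'=b$. In case~\ref{dfn:pr-hcl:2} we get $g\acc h''\leadsto b$, so $h''$ is the immediate predecessor of $b$; since $h''\prec b\preceq h_2$ we have $h''\preceq h_2$ and may take $b'=h''$. Neither case needs the induction hypothesis.

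The delicate case, and the main obstacle, is case~\ref{dfn:pr-hcl:3}: here \PRhcl yields $a\acc h''$ with $h''\leadsto b$, which does \emph{not} step back on the source side (it still concerns $a$, not $g$), but it does produce a strictly shorter target $h''$ with $h''\prec b\preceq h_2$. This is exactly why the induction is on $\abs{b}$: applying the induction hypothesis to the accessible pair $(a,h'')$, whose target is shorter than $b$, yields the desired $b'\preceq h_2$ with $g\acc b'$. The base case $\abs{b}=0$ is immediate, since nothing $\leadsto\epsilon$, so cases~\ref{dfn:pr-hcl:2} and~\ref{dfn:pr-hcl:3} are vacuous and \PRhcl forces case~\ref{dfn:pr-hcl:1}, giving $g\acc\epsilon$.

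The only ingredients this argument really uses are \PRhcl and the fact that the prefix relation is closed under taking predecessors (so the intermediate histories $h''$ stay prefixes of $h_2$); introspectivity and the hypothesis $h_1\preceq h_2$ are the ambient setting of the section rather than load-bearing here. I expect no hidden difficulty beyond cleanly organizing the two nested reductions: the outer peeling of events off $h_1$, and the inner induction on target length that absorbs the repeated occurrences of case~\ref{dfn:pr-hcl:3}.
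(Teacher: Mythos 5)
Your proposal is correct and follows essentially the same route as the paper: the paper's proof is a one-line induction on $h_1$ using \cref{result:pr-hcl-implies-pr-hc} (\PRhcl implies \PRhc) together with transitivity of $\preceq$, and your outer peeling of events off $h_1$ is exactly that induction. Your inner induction on $\abs{b}$ establishing the one-step-back claim is precisely the proof of \cref{result:pr-hcl-implies-pr-hc} unfolded inline (fused with the transitivity step), so the only difference is that you re-derive that lemma rather than cite it; your observation that introspectivity and $h_1\preceq h_2$ are not load-bearing is also accurate.
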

\begin{proof}
  The claim can be shown with a simple induction on $h_1$,
  using \cref{result:pr-hcl-implies-pr-hc} and transitivity of $\preceq$.
\end{proof}

\begin{lemma}
  \label{result:flashlight}
  For any introspective ETL frame with \PRhcl and
  histories $h_1$ and $h_2'\preceq h\preceq h_2$, if $h_1\acc h_2$ and $h_1\acc h_2'$
  then $h_1\accSfive h$.
\end{lemma}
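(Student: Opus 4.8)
The plan is to peel the statement down to a symmetric core and prove that core by a single induction on the total length of the two histories involved. First I would use introspectivity: from $h_1\acc h_2$ and $h_1\acc h_2'$, Euclideanness yields $h_2\acc h_2'$, and by \cref{result:trans-eucl-eq-class} the closure $\accSfive$ is an equivalence relation containing $\acc$ (on introspective frames it is merely the reflexive and symmetric closure). Since $h_1\accSfive h_2$, it then suffices to establish $h_2\accSfive h$. Setting $a:=h_2$ and $b:=h_2'$ and recalling $b\preceq h\preceq a$, the remaining task is the following self-contained claim: whenever $a\acc b$ with $b\preceq a$, then $a\accSfive c$ for every $c$ with $b\preceq c\preceq a$.

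I would prove this claim by induction on $\abs a+\abs b$. If $a=b$ the only $c$ in range is $a$ itself and $a\accSfive a$ holds by reflexivity, so assume $b\prec a$ and write $a=\hat a e$ for the unique final event $e$ of $a$, so that $b\preceq\hat a$. Applying \PRhcl to $\hat a e=a\acc b$ produces three cases: (i) $\hat a\acc b$; (ii) $\hat a\acc p$; or (iii) $a\acc p$, where in (ii) and (iii) the history $p$ denotes the immediate predecessor of $b$ (so these presuppose $b\neq\epsilon$). In case (i) the induction hypothesis applied to $(\hat a,b)$ gives $\hat a\accSfive c$ for every $c$ with $b\preceq c\preceq\hat a$; since $a\accSfive b$ and $\hat a\accSfive b$ together force $a\accSfive\hat a$, transitivity upgrades this to $a\accSfive c$, and $c=a$ is covered by reflexivity. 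In case (iii) the hypothesis applied directly to $(a,p)$ gives $a\accSfive c$ for every $c$ with $p\preceq c\preceq a$, which already includes the range $b\preceq c\preceq a$ because $p\prec b$. In case (ii) the hypothesis applied to $(\hat a,p)$ gives $\hat a\accSfive c$ for every $c$ with $p\preceq c\preceq\hat a$; taking $c=b$ yields $\hat a\accSfive b$, hence $a\accSfive\hat a$ as before, and transitivity again covers the whole range.

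The one genuinely delicate point is the bookkeeping forced by the \PRhcl disjunction. Its self-referential third clause relates $a$ to the predecessor $p$ of $b$, which lies strictly below $b$, so it does not shrink the segment between $b$ and $a$; this is exactly why I induct on $\abs a+\abs b$ rather than on the segment length $\abs a-\abs b$, since the former strictly decreases in all three recursive calls $(\hat a,b)$, $(\hat a,p)$, and $(a,p)$, whereas the latter grows in case (iii). Apart from that, the argument only uses that $\accSfive$ is an equivalence relation with $\acc\subseteq\accSfive$, both of which hold on introspective frames.
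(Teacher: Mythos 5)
Your proof is correct, and it takes a genuinely different route from the paper's. Both arguments open identically, using Euclideanness to get $h_2\acc h_2'$, but then they diverge. The paper runs an extremal argument: it takes $h_2''$, the \emph{shortest} prefix of $h_2$ with $h_2\acc h_2''$, applies \cref{result:acc-along-history} twice (first to get $h'\preceq h_2''$ with $h\acc h'$, then to get $h''\preceq h'$ with $h_2''\acc h''$), and uses transitivity of $\acc$ plus minimality of $h_2''$ to force $h'=h_2''$; since then $h_1$ and $h$ both $\acc$-access the single witness $h_2''$, the claim follows. You instead prove the segment claim ($a\accSfive c$ for all $b\preceq c\preceq a$ whenever $a\acc b$ and $b\preceq a$) by one self-contained strong induction on $\abs{a}+\abs{b}$, working directly with the three-way disjunction of \PRhcl; your case analysis is exhaustive (with the empty-history situation correctly absorbed into case (i)), each recursive call satisfies the hypotheses with strictly smaller measure, and your justification for inducting on $\abs{a}+\abs{b}$ rather than segment length is exactly the right observation, since clause (iii) of \PRhcl does not shrink the segment. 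The trade-offs: your core induction uses neither transitivity nor Euclideanness of $\acc$---introspectivity enters exactly once, in the initial reduction---so it isolates a fact valid on arbitrary \PRhcl frames and makes visible where introspectivity is actually needed; it also avoids any dependence on the auxiliary \cref{result:acc-along-history}. The paper's route, by contrast, reuses an observation it needs anyway for \cref{result:pr-hcl-preserved-under-s5-closure}, and yields a slightly stronger intermediate fact: a single concrete history $h_2''$ that is plain-$\acc$-accessible from $h_2$, from $h_1$, and from $h$, rather than connections holding only in the closure $\accSfive$. One cosmetic point: your parenthetical that $\accSfive$ is merely the reflexive-symmetric closure on introspective frames is never used in your argument---as you note yourself, you only need that $\accSfive$ is an equivalence relation containing $\acc$, which holds by definition of the S5 closure---so it can simply be dropped.
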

\begin{proof}
  With Euclideanness, we obtain $h_2\acc h_2'$.
  Let $h_2''$ be the shortest prefix of $h_2$ such that $h_2\acc h_2''$ (note that $h_2''\preceq h_2'\preceq h$).
  \Cref{result:acc-along-history} implies that there must be $h'\preceq h_2''$ with $h\acc h'$.
  Another application of \cref{result:acc-along-history} then yields that
  there is $h''\preceq h'$ such that $h_2''\acc h''$,
  and by transitivity we get $h_2\acc h''$.
  Now if $h'\prec h_2''$ then $h''\prec h_2''$,
  contradicting that $h_2''$ is the shortest prefix accessible from $h_2$.
  So $h'=h_2''$, thus $h\acc h_2''$.
  Since $h_1\acc h_2\acc h_2''$, we obtain the claim.
\end{proof}

\begin{lemma}
  \label{result:pr-hcl-preserved-under-s5-closure}
  If an introspective ETL tree has \PRhcl, then so does its S5 closure.
\end{lemma}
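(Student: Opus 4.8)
The plan is to verify \cref{dfn:pr-hcl} directly for the S5 closure $\accSfive$. By \cref{result:trans-eucl-eq-class}, on an introspective frame $\accSfive$ is just the reflexive and symmetric closure of $\acc$, and its equivalence classes are the (nonempty) $\acc$-classes, each enlarged by the histories whose $\acc$-image equals that class. So I fix $h$, an event $e$, and a history $h^*$ with $he\accSfive h^*$, and try to exhibit one of the three disjuncts of \PRhcl, now read off $\accSfive$: (i) $h\accSfive h^*$, (ii) $h\accSfive h''\leadsto h^*$ for some $h''$, or (iii) $he\accSfive h''\leadsto h^*$ for some $h''$. Two sub-cases are immediate: if $he=h^*$, disjunct (ii) holds with $h''=h$ by reflexivity of $\accSfive$; and if $he\acc h^*$, the original \PRhcl supplies a witness under $\acc$, which transfers verbatim since $\acc\subseteq\accSfive$.

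The clean core handles the principal remaining case, where $h^*$ is $\acc$-visible from neither direction except as $h^*\acc he$ and $h^*$ has a one-step predecessor $g$, say $g\leadsto_{e^*}h^*$. Here I would apply the \emph{original} \PRhcl to the edge $g e^*=h^*\acc he$ (reading $he$ as the ``$h'$'' of \cref{dfn:pr-hcl}). Because we are on a tree, the unique one-step predecessor of $he$ is $h$, so every occurrence of ``$h''\leadsto he$'' forces $h''=h$; the three outcomes therefore collapse to $g\acc he$, $g\acc h$, and $h^*\acc h$. Using only $\acc\subseteq\accSfive$ and the symmetry of $\accSfive$, these yield respectively disjunct (iii) ($he\accSfive g\leadsto h^*$), disjunct (ii) ($h\accSfive g\leadsto h^*$), and disjunct (i) ($h\accSfive h^*$). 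I expect this step to be short and to be where the tree hypothesis (unique predecessor, unique root) does its work, relying on \cref{result:pr-hcl-implies-pr-hc} only implicitly.

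The hard part will be the residual configurations that escape the core argument: namely when $h^*$ is the root $\epsilon$ (so disjuncts (ii) and (iii) are vacuous and one is forced to prove (i) $h\accSfive\epsilon$ outright), and when $he$ and $h^*$ are tied together only through the closure rather than by a single $\acc$-edge, so that no one application of \PRhcl sees both endpoints at once. For these I would argue by descending the branch from $he$ toward $\epsilon$: at each node the reflexive (or self-accessible) edge guaranteed by Euclideanness and \cref{result:trans-eucl-eq-class} licenses a further application of \PRhcl, \cref{result:acc-along-history} pushes the accessibilities down onto prefixes, and \cref{result:flashlight} fills in the intermediate histories so that the accumulated relations can be assembled into the required $\accSfive$-witness. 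Establishing that this propagation terminates in the desired link $h\accSfive\epsilon$ (and, more generally, that the closure-only case always reduces to a direct $\accSfive$-edge of the right shape) is the main obstacle; the rest is bookkeeping with the symmetry of $\accSfive$ and \cref{result:pr-hcl-implies-pr-hc}.
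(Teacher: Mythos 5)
Your easy cases and your ``clean core'' coincide with the paper's own proof: the sub-cases $he=h^*$ and $he\acc h^*$ are handled there identically, and applying the original \PRhcl to the reversed edge $h^*\acc he$ (with the tree's unique-predecessor property collapsing the witnesses of conditions~(\ref{dfn:pr-hcl:2}) and~(\ref{dfn:pr-hcl:3}) to $h$) is exactly what the paper compresses into the sentence ``if $h'\neq\epsilon$ then the same argument as in the previous case applies''. Up to that point your argument is correct and is essentially the paper's.

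The gap is the root case $h^*=\epsilon$, which you leave as an acknowledged obstacle: that case is where the paper's proof does its actual work, and the one idea that closes it is absent from your sketch. No propagation down the branch is needed. From $\epsilon\acc he$, Euclideanness gives the self-loop $he\acc he$, and applying \PRhcl to \emph{that} edge (again collapsing witnesses to $h$ on a tree) leaves exactly three possibilities: $h\acc he$, $h\acc h$, or $he\acc h$. In the first and third, $h\accSfive\epsilon$ follows by composing with $\epsilon\acc he$ inside the equivalence closure. In the second, \cref{result:acc-along-history} applied to $h\acc h$ (with $\epsilon\preceq h$) gives some $h''\preceq h$ with $\epsilon\acc h''$, and then \cref{result:flashlight} (with $\epsilon\acc he$, $\epsilon\acc h''$, and $h''\preceq h\preceq he$) gives $\epsilon\accSfive h$; symmetry then yields disjunct~(\ref{dfn:pr-hcl:1}) in all three cases. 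So the two lemmas you cite are indeed the right tools, but without the trigger---using Euclideanness to manufacture $he\acc he$ and feeding that self-loop back into \PRhcl---they never get applied, and your ``main obstacle'' remains open. Finally, your other residual case (pairs $he\accSfive h^*$ related only through a chain of $\acc$-edges) is inconsistent with your own opening premise, which is also the paper's (footnote~\ref{fn:s5-closure-is-local}): if $\accSfive$ is just the reflexive and symmetric closure of $\acc$, then any pair with $he\neq h^*$ is a single $\acc$-edge in one direction or the other, so within that setup this case is vacuous and should not have been listed as a second obstacle.
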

\begin{proof}
  Take any introspective tree \F with \PRhcl.
  To see that its S5 closure~$\FSfive$ also has \PRhcl, let $h,e,h'$ be such that $he\accSfive h'$.
  We need to show that $\accSfive$ satisfies one of the three conditions in the definition of \PRhcl.

  Since~$\accSfive$ is the symmetric and reflexive closure of~$\acc$
  (cf.~footnote~\ref{fn:s5-closure-is-local}),
  we have either of these cases:
  \begin{itemize}
  \item $he=h'$.
    Since~$h\accSfive h$, condition~(\ref{dfn:pr-hcl:2}) in the definition of \PRhcl obtains.
  \item $he\acc h'$.
    Since \F has \PRhcl, $\acc$ satisfies one of the three conditions of \PRhcl, thus so does $\accSfive$.
  \item $h'\acc he$.
    If $h'\neq\epsilon$ then the same argument as in the previous case applies.
    Otherwise, $h'=\epsilon\acc he$.
    Euclideanness of $\acc$ yields $he\acc he$,
    and since $\acc$ satisfies \PRhcl, we have either of these three cases:
    \begin{enumerate}[(i)]
    \item $h\acc he$.
      Since $h'\acc he$, we get $h\accSfive h'$,
      so $\accSfive$ satisfies condition~(\ref{dfn:pr-hcl:1}) of \PRhcl.
    \item $h\acc h$.
      Since $h'=\epsilon\preceq h$,
      \cref{result:acc-along-history} yields that there is $h''\preceq h$
      such that $h'\acc h''$.
      Now we have $h'\acc he$, $h'\acc h''$ and $h''\preceq h\preceq he$,
      so \cref{result:flashlight} applies and yields $h'\accSfive h$.
      Symmetry of~$\accSfive$ yields $h\accSfive h'$,
      so $\accSfive$ satisfies condition~(\ref{dfn:pr-hcl:1}) of \PRhcl.
    \item $he\acc h$.
      Together with $h'\acc he$ we get $h'\accSfive h$.
      Symmetry of~$\accSfive$ again yields $h\accSfive h'$,
      condition~(\ref{dfn:pr-hcl:1}) of \PRhcl. \qedhere
    \end{enumerate}
  \end{itemize}
\end{proof}

We can now straightforwardly prove the stated result.
\begin{proof}[Proof of \cref{result:pr-hcl-equiv-pr}]
  \PR implies \PRhcl by definition.
  To see that the reverse direction holds,
  take any introspective ETL tree \F that has \PRhcl.
  Due to \cref{result:pr-hcl-preserved-under-s5-closure},
  its S5 closure also has \PRhcl,
  and with \cref{result:introsp-pr-ee-iff-closure-pr-hcl,result:pr-hcl-implies-pers-ins}
  it follows that \F also has \PRee.
\end{proof}

\medskip


Since the proof of \cref{result:pr-hcl-definable} did not use S5,
we immediately obtain an axiomatization of ETL
on introspective trees with perfect recall.
Further, it is easy to see that on synchronous trees, \PRs and \PRhcl are still equivalent,
so we also have an axiomatization of ETL on synchronous introspective trees with perfect recall.

\subsection{Characterizing \PR on forests}
\label{sec:char-pr-forests}

\Cref{result:pr-hcl-equiv-pr} does not apply to forests,
as witnessed by \cref{fig:4-5-pr-hc-not-pr-ee}\subref{fig:4-5-pr-hc-not-pr-ee:not-pr-ee}.
Before we look at how to characterize \PR here,
we note that, unlike on S5 forests, defining \PR on introspective forests generally is impossible in ETL
(the same holds for \PRee).

\begin{proposition}
  \PR is not modally definable on introspective ETL forests
  (and thus not on general ETL forests either).
\end{proposition}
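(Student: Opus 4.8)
The plan is to use the standard fact that validity of modal formulas is preserved in the direction of a surjective bounded morphism (p-morphism) between frames (a staple of modal correspondence theory; see, e.g., \cite{chellas_modal_1980}): if $f\colon\F\twoheadrightarrow\G$ is a surjective bounded morphism and a modal formula is valid on $\F$, then it is valid on $\G$. Concretely, I would exhibit two introspective ETL forests $\F$ and $\G$ such that $\F$ has \PR while $\G$ does not, together with a surjective bounded morphism $f\colon\F\to\G$ respecting both the epistemic relation $\acc$ and every temporal successor relation $\leadsto_e$. Given such an $f$, a putative formula $\phi$ defining \PR would be valid on $\F$ (since $\F$ has \PR), hence valid on $\G$, forcing $\G$ to have \PR, a contradiction.

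The target $\G$ should be an introspective forest that has \PRhcl but fails \PRee, of the kind witnessed by \cref{fig:4-5-pr-hc-not-pr-ee}; this is exactly the configuration that \cref{result:pr-hcl-equiv-pr} rules out on \emph{trees} but that becomes available on forests. The source $\F$ should be a forest that does satisfy \PR, built so as to keep epistemically indistinguishable histories separated (across its several trees) in such a way that each history's epistemic experience is determined unambiguously and \PRee holds throughout. The collapse performed by $f$ then identifies two histories whose epistemic experiences differ modulo stutterings, introducing a \PRee failure in $\G$. This is internally consistent: \PRhcl is modally definable by \cref{result:pr-hcl-definable}, so its validity transfers from $\F$ to $\G$ and $\G$ retains \PRhcl; only the \PRee component is destroyed, which is precisely why \PR fails in $\G$ while the locally detectable conditions survive. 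The conceptual basis for expecting this to work is that \PRee tracks the entire past of a history back to its root, whereas the ETL language has only future-directed next-step operators together with the epistemic operator, and so cannot inspect prefixes directly.

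The steps, in order, are: (1) define $\F$ and $\G$ explicitly as introspective forests, verify transitivity and Euclideanness of $\acc$ in each, and check that $\F$ has \PR but $\G$ does not (the latter by displaying a pair $h\acc h'$ with $\EE(h)\not\approx\EE(h')$); (2) define $f$ and verify the two bounded-morphism conditions—the forth condition, that $f$ preserves $\leadsto_e$ and $\acc$, and the back condition, that every relational successor of $f(x)$ is the $f$-image of a relational successor of $x$—separately for each $\leadsto_e$ and for $\acc$; (3) check surjectivity, noting that since the forth condition forces roots to map to roots, $\F$ must supply a root above every root of $\G$; (4) assemble the contradiction as above.

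The main obstacle will be step (2): engineering $\F$ and $f$ so that the back condition holds \emph{simultaneously} for all temporal relations $\leadsto_e$ and for $\acc$, while the forest structure (distinct roots, prefix-closedness, finiteness of each tree) is preserved, and so that the identification is exactly severe enough to destroy \PRee without disturbing the local conditions. In particular one must ensure that $\F$ genuinely has \PR at its deeper histories and not merely at its roots, while the collapse $f$ nonetheless produces a \PRee failure in $\G$; balancing these two requirements is the delicate part, and it is here that the absence of past-time operators in ETL is exploited.
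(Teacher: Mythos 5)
Your proposal is correct and takes essentially the same approach as the paper: the paper's proof is exactly the bounded-morphism preservation argument you describe, using the witnesses you point to in \cref{fig:4-5-pr-hc-not-pr-ee}---the forest $\F'$ there has \PR and maps onto the forest \F, which has \PRhcl but not \PRee, so no modal formula can define \PR. The explicit source forest and morphism whose construction you defer to steps (1)--(3) are precisely what that figure already supplies (with dashed arrows indicating the map), so the delicate engineering you anticipate is already done.
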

\begin{proof}
  $\F'$ in \cref{fig:4-5-pr-hc-not-pr-ee} has \PR,
  while its bounded morphic image \F does not.
  Since modally definable properties are closed under bounded morphic images
  (cf.~\cite{blackburn_modal_2001}),
  the claim follows.
\end{proof}

\begin{figure}
  \centering
  \hfill
  \subfloat[Forest \F (two trees)]{
    \label{fig:4-5-pr-hc-not-pr-ee:not-pr-ee}
    \begin{tikzpicture}[etltree,remember picture]
      \node[unlabeled] (root1) {}
        child {
          node {}
          edge from parent node {$e_1$}
        };
      \node[unlabeled,right of=root1] (root2) {};
      \draw[accrel] (root1) edge[loop left] () 
            (root1-1) edge[loop left] ()
            (root2) edge[rl] (root1-1);
      \useasboundingbox (current bounding box.north east)+(1,.7) rectangle (-1,-.7);
    \end{tikzpicture}
  }
  \hfill
  \subfloat[Forest $\F'$ (three trees)]{
    \hfill
    \label{fig:4-5-pr-hc-not-pr-ee:pr-ee}
    \begin{tikzpicture}[etltree,remember picture]
      \node[unlabeled] (root1') {}
        child {
          node {}
          edge from parent node {$e_1$}
        };
      \node[unlabeled,right of=root1'] (root2') {};
      \node[unlabeled,right of=root2'] (root3') {};
      \draw[accrel]
        (root1') edge[loop left] () 
        (root1'-1) edge[loop left] ()
        (root2') edge[lr] (root3')
        (root3') edge[loop right] ();

      {[overlay,mapping]
        \draw
          (root1') edge[bend left] (root1)
          (root2') edge[bend left] (root2)
          (root1'-1) edge[bend right=40] (root1-1)
          (root3') edge[bend right=70] (root1-1);
      }
      \useasboundingbox (current bounding box.north)+(0,.6) rectangle (0,-.6);
    \end{tikzpicture}
    \hfill
  }
  \hfill~
  \caption{\subref{fig:4-5-pr-hc-not-pr-ee:not-pr-ee} \F is an introspective forest that has \PRhcl, but not \PRee.
    \subref{fig:4-5-pr-hc-not-pr-ee:pr-ee} $\F'$ has \PRhcl and \PRee, and \F is its bounded morphic image
    via the bounded morphism depicted with dashed arrows.
    So \PR is not modally definable on forests.}
  \label{fig:4-5-pr-hc-not-pr-ee}
\end{figure}




From \cref{result:introsp-pr-ee-iff-closure-pr-hcl,result:pr-hcl-implies-pers-ins}
it is clear that 
any introspective frame has \PR \tiff both it and its S5 closure have \PRhcl.
However, with an additional slight restriction, we can continue to use \PRhcl to characterize \PR.
From \cref{fig:4-5-pr-hc-not-pr-ee}, it is intuitively clear that
accessibilities from some root to a later state in some (different) tree are problematic:
In such cases, \PRhcl is vacuously satisfied, while \PRee may not hold.

To fix this, call an ETL forest \F \dfnless{initially synchronous} if,
for any two roots~$\epsilon,\epsilon'$ and history~$h$ with $\epsilon\preceq h$ and $\epsilon'\acc h$,
we also have $\epsilon'\acc\epsilon$.
That is, the agent at least considers it possible that indeed no time has passed initially,
although he may immediately lose synchronicity and also consider later states possible.
We then get the following.
\begin{lemma}
  \label{result:pr-hcl-preserved-under-s5-closure-on-init-sync-forests}
  If an introspective and initially synchronous ETL forest has \PRhcl, then so does its S5 closure.
\end{lemma}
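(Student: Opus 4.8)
The plan is to follow the proof of \cref{result:pr-hcl-preserved-under-s5-closure} for trees as closely as possible, isolate the single point where having several roots breaks the tree argument, and repair exactly that point with initial synchronicity. Since the forest \F is introspective, its S5 closure $\accSfive$ is merely the symmetric and reflexive closure of $\acc$ (by \cref{result:trans-eucl-eq-class} and footnote~\ref{fn:s5-closure-is-local}), so I would verify \PRhcl for $\accSfive$ by taking arbitrary $h,e,h'$ with $he\accSfive h'$ and splitting on how that edge arose: $he=h'$, $he\acc h'$, or $h'\acc he$. The first two cases transfer verbatim: if $he=h'$ then $h\accSfive h\leadsto h'$ gives condition~(\ref{dfn:pr-hcl:2}); if $he\acc h'$ then \F already witnesses one of the three \PRhcl conditions and $\acc\subseteq\accSfive$ carries it over to $\accSfive$.

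The substantive case is $h'\acc he$, which I would subdivide according to whether $h'$ is a root. When $h'$ is not a root it has a unique predecessor $g\leadsto h'$, so writing $h'=ge''$ and applying \PRhcl to the edge $ge''\acc he$ yields one of: $g\acc he$, $g\acc h''\leadsto he$, or $ge''\acc h''\leadsto he$. Because $he$ has unique predecessor $h$, the middle history $h''$ in the latter two options is forced to be $h$; then symmetry of $\accSfive$ turns these into $he\accSfive g\leadsto h'$, $h\accSfive g\leadsto h'$, and $h\accSfive h'$, i.e.\ conditions~(\ref{dfn:pr-hcl:3}), (\ref{dfn:pr-hcl:2}), and~(\ref{dfn:pr-hcl:1}) respectively. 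This subcase uses nothing beyond \PRhcl and the uniqueness of predecessors, so it is identical to the tree situation.

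The main obstacle, and the only place initial synchronicity is needed, is the subcase where $h'$ is a root $\epsilon'$ with $\epsilon'\acc he$. A root has no predecessor, so conditions~(\ref{dfn:pr-hcl:2}) and~(\ref{dfn:pr-hcl:3}) are simply unavailable and I am forced to establish condition~(\ref{dfn:pr-hcl:1}), namely $h\accSfive h'$. In the tree proof the corresponding step exploited $\epsilon'\preceq h$ to invoke \cref{result:acc-along-history} and \cref{result:flashlight}; in a forest $\epsilon'$ may be the root of a \emph{different} tree, so $\epsilon'\preceq h$ can fail and that argument collapses. This is exactly what initial synchronicity fixes: letting $\epsilon$ be the root of the tree containing $he$ (so $\epsilon\preceq h\preceq he$), the hypothesis $\epsilon'\acc he$ gives $\epsilon'\acc\epsilon$. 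Now $\epsilon'\acc he$, $\epsilon'\acc\epsilon$ and $\epsilon\preceq h\preceq he$ place us in the hypotheses of \cref{result:flashlight} with $h_1=\epsilon'$, $h_2'=\epsilon$, $h_2=he$ (its intermediate history being our~$h$), yielding $\epsilon'\accSfive h$; symmetry of $\accSfive$ then delivers $h\accSfive h'$. I expect the conceptual crux to be recognizing that cross-tree accessibilities \emph{from roots} are the sole obstruction and that initial synchronicity converts each such edge into an edge to the correct root $\epsilon$, after which \cref{result:flashlight} does the rest; the remaining effort is routine bookkeeping of the root/non-root dichotomy.
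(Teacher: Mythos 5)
Your proof is correct, and it follows the paper's overall skeleton: rerun the tree proof of \cref{result:pr-hcl-preserved-under-s5-closure}, observe that the only breakdown is a root $\epsilon'$ of a possibly different tree with $\epsilon'\acc he$, and use initial synchronicity precisely there to obtain $\epsilon'\acc\epsilon$. Where you genuinely diverge is in how that case is finished. The paper, after getting $\epsilon'\acc\epsilon$, applies Euclideanness to obtain $\epsilon\acc he$ and then detours through \cref{result:pr-hcl-equiv-pr} and epistemic experiences, concluding $\EE(\epsilon')\approx\EE(\epsilon)\approx\EE(he)$ and leaving it to the reader to translate this back into a \PRhcl condition for $\accSfive$. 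You instead stay entirely inside the accessibility calculus: the data $\epsilon'\acc he$, $\epsilon'\acc\epsilon$, $\epsilon\preceq h\preceq he$ are exactly the hypotheses of \cref{result:flashlight} (with $h_1=\epsilon'$, $h_2'=\epsilon$, $h_2=he$), which yields $\epsilon'\accSfive h$ and hence condition~(\ref{dfn:pr-hcl:1}) by symmetry. This is arguably cleaner than both the paper's forest proof and its own tree proof: initial synchronicity hands you the second flashlight hypothesis directly, so you need neither the Euclideanness-plus-subcase analysis nor \cref{result:acc-along-history}, and you avoid invoking the heavier equivalence theorem \cref{result:pr-hcl-equiv-pr}. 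Two small points deserve a sentence in a polished write-up: \cref{result:flashlight} is stated for frames (trees), so you should note that its proof never uses that $h_1$ is prefix-comparable to $h_2'\preceq h\preceq h_2$ and therefore carries over to forests with a cross-tree $h_1$ (or cite the paper's blanket remark that its considerations extend to forests); and when $\epsilon'$ happens to be the root $\epsilon$ of the tree containing $he$ itself, your appeal to initial synchronicity reads ``any two roots'' as allowing $\epsilon'=\epsilon$---otherwise simply note that the unchanged tree argument already covers that subcase.
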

\begin{proof}
  The proof is analogous to that of \cref{result:pr-hcl-preserved-under-s5-closure},
  with one additional observation:
  If $\epsilon'\acc h$ for some history~$h$ with root~$\epsilon$,
  then initial synchronicity yields $\epsilon'\acc\epsilon$.
  Euclideanness then yields $\epsilon\acc h$,
  and with \cref{result:pr-hcl-equiv-pr} it follows that $\EE(\epsilon)\approx\EE(h)$.
  Due to \cref{result:trans-eucl-eq-class}, we also have $\EE(\epsilon')\approx\EE(\epsilon)$,
  so $\EE(\epsilon')\approx\EE(h)$ by transitivity of~$\approx$.
\end{proof}

\begin{theorem}
  An introspective and initially synchronous ETL forest has \PR \tiff it has \PRhcl.
\end{theorem}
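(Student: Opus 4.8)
The plan is to mirror the proof of \cref{result:pr-hcl-equiv-pr} for the tree case, since the forest statement has exactly the same shape. Recall that \PR is by definition the conjunction of \PRee and \PRhcl, so the forward direction---\PR implies \PRhcl---is immediate. All the content lies in the reverse direction, where I must show that on an introspective, initially synchronous forest, \PRhcl already forces \PRee (and hence, combined with the assumed \PRhcl, the full notion \PR).

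To carry this out, I would take an arbitrary introspective and initially synchronous forest \F with \PRhcl and pass to its S5 closure \FSfive. The first step is to observe that \PRhcl is inherited by \FSfive; this is precisely \cref{result:pr-hcl-preserved-under-s5-closure-on-init-sync-forests}, which I may assume. The second step is to transfer \PRhcl on the closure back to \PRee on the original frame. For this I invoke \cref{result:introsp-pr-ee-iff-closure-pr-hcl}, which equates \PRee of an introspective frame satisfying persistent insanity with \PRhcl of its S5 closure; the persistent-insanity hypothesis comes for free here because \PRhcl implies it (\cref{result:pr-hcl-implies-pers-ins}). Chaining these, \F has \PRee, and together with the assumed \PRhcl this yields \PR, completing the argument.

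The substantive obstacle is not in this final theorem but in the closure-preservation lemma on which it rests: on forests the naive analogue of \cref{result:pr-hcl-preserved-under-s5-closure} fails, because an accessibility from one root into a later state of a \emph{different} tree lets \PRhcl hold vacuously while \PRee breaks, exactly as in \cref{fig:4-5-pr-hc-not-pr-ee}. Initial synchronicity is what rules this out: whenever a root sees some history $h$, it must also see the root of $h$, and Euclideanness then propagates accessibility along the prefix so that the epistemic-experience bookkeeping survives the closure. Once that lemma is in hand, the theorem itself is a short assembly of \cref{result:pr-hcl-preserved-under-s5-closure-on-init-sync-forests,result:introsp-pr-ee-iff-closure-pr-hcl,result:pr-hcl-implies-pers-ins}, so I expect no further difficulty at this final step.
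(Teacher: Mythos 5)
Your proposal is correct and is essentially identical to the paper's own proof: the paper likewise notes that \PR implies \PRhcl by definition and derives the converse by chaining \cref{result:pr-hcl-preserved-under-s5-closure-on-init-sync-forests}, \cref{result:introsp-pr-ee-iff-closure-pr-hcl}, and \cref{result:pr-hcl-implies-pers-ins}, exactly as in the tree case (\cref{result:pr-hcl-equiv-pr}). Your closing remarks on why initial synchronicity is needed match the paper's surrounding discussion as well.
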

\begin{proof}
  Analogously to \cref{result:pr-hcl-equiv-pr}, this follows from
  \cref{result:pr-hcl-preserved-under-s5-closure-on-init-sync-forests,result:introsp-pr-ee-iff-closure-pr-hcl,result:pr-hcl-implies-pers-ins}.
\end{proof}

\section{Conclusions}
\label{sec:conclusions}

We discussed two different ways of ``not losing information'', that is,
accessing and reasoning with one's memories.
The first one has been the fundamental definition in the literature on perfect recall.
It assumes that a perfect-recall agent can use differences in past epistemic states
in order to distinguish present states.
The second one is a consistency condition on the histories considered possible.
While it has been used in the literature as technical condition,
we provided it with its own motivation.

The two notions have previously been studied in~S5, where they coincide,
and in logics with long-range temporal operators.
We gave a novel characterization and axiomatization in ETL,
using only next-step temporal operators.

We then dropped the assumption of~S5
and noticed that the notions no longer coincide.
Since they capture two independently motivated ways of reasoning with memories,
we examined and characterized them individually as well as jointly.

\medskip

Given that the two notions use different aspects of ETL models,
some discussion is needed concerning the access that we assume an agent to have.

It is a general issue in modeling agents
to what extent the model faithfully represents an agent's internal workings,
and to what extent it represents the modeler's external perspective.
What we mean if we say that an agent ``does not lose information'', of course,
depends on what information we ascribe to him in the first place.
ETL is agnostic as to whether the agent has direct access to the semantic structures constituting a model
or whether they are just a representation for the modeler,
and whether the logic language is supposed to reflect the agent's ``mentalese''
or whether it is just a way for the modeler to talk \emph{about} the agent.
Depending on the intended interpretation,
one may exclude or include certain features in what is considered the agent's information,
and one may accept or reject certain methods for the agent to access and reason with his memories.

Since ETL does not specify these issues,
we simply examined what can be said \emph{if} the agent has access to certain aspects of the model.
Outside of S5, where the notions do not coincide,
it depends on the modeled situation
which definition of perfect recall is the right one.

\medskip

An interesting question for further research is
whether there are additional aspects of reasoning with memories,
which might be conflated in S5 with the ones we discussed, but distinct in other settings.

\medskip

The general goal with these considerations is
to help improve our understanding of
the assumptions implicit in the framework or explicitly made by the modeler when modeling agents.
Along the lines of the inspiring work by~\citet{van_benthem_merging_2009},
we hope to obtain more fine-grained insights in more general settings.

\section*{Acknowledgments}
\label{sec:acknowledgments}

This work came out of discussions with Benedikt Löwe and Cédric Dégremont.
Thanks also to Krzysztof Apt, Can Başkent, and Johan van Benthem for comments.

\bibliographystyle{plainnat}
\bibliography{all}

\end{document}